\newcommand{\algref}[1]{{\tt #1}}
\DeclareMathOperator{\Sol}{Sol}
\DeclareMathOperator{\Mon}{Mon}
\DeclareMathOperator{\lm}{lm}
\DeclareMathOperator{\NF}{NF}
\DeclareMathOperator{\ld}{ld}
\DeclareMathOperator{\init}{init}
\DeclareMathOperator{\res}{res}
\DeclareMathOperator{\disc}{disc}
\DeclareMathOperator{\sep}{sep}
\newcommand{\setderiv}{\Delta}
\newcommand{\automorphism}{\sigma}
\newcommand{\ranking}{\succ}
\newcommand{\setauto}{\Sigma}
\newcommand{\var}{y}
\newcommand{\algvar}{z}
\newcommand{\multclos}[1]{\langle \, #1 \, \rangle}
\newcommand{\diffidealgen}[1]{[#1]}
\newcommand{\Gr}{Gr\"obner }
\newcommand{\cI}{\mathcal{I}}
\newcommand{\cR}{\mathcal{R}}
\newcommand{\Q}{\mathbb{Q}}
\renewcommand{\C}{\mathbb{C}}
\newcommand{\Z}{\mathbb{Z}}
\newcommand{\R}{\mathbb{R}}
\newcommand{\cK}{\mathcal{K}}
\def \bg #1 {\begin{tabular}{{#1}}}
\def \nd {\end{tabular}}
\theoremstyle{acmdefinition}
\newtheorem{remark}[theorem]{Remark}
\begin{document}
\title[Algorithmic approach to strong consistency analysis for FDA to PDE systems]{Algorithmic approach to strong consistency analysis of finite difference approximations to PDE systems}

\author{Vladimir~P.~Gerdt}
\affiliation{%
  \institution{ Joint Institute for Nuclear Research, Dubna, Russia \\ and Peoples' Friendship University of Russia (RUDN)}
  \city{Moscow}
  \state{Russia}
}
\email{gerdt@jinr.ru}

\author{Daniel Robertz}
\affiliation{%
  \institution{School of Computing, Electronics and Mathematics\\ University of Plymouth}
  \city{Plymouth}
  \state{United Kingdom}
}
\email{daniel.robertz@plymouth.ac.uk}

\begin{abstract}
For a wide class of polynomially nonlinear systems of partial differential equations we suggest an algorithmic approach to the s(trong)-consistency analysis of their finite difference approximations on Cartesian grids. First we apply the differential Thomas decomposition to the input system, resulting in a partition of the solution set. We consider the output simple subsystem that contains a solution of interest. Then, for this subsystem, we suggest an algorithm for verification of s-consistency for its finite difference approximation. For this purpose we develop a difference analogue of the differential Thomas decomposition, both of which jointly allow to verify the s-consistency of the 
approximation. As an application of our approach, we show how to produce s-consistent difference approximations to the incompressible Navier-Stokes equations including the pressure Poisson equation.

\end{abstract}

\begin{CCSXML}
<ccs2012>
<concept>
<concept_id>10010147.10010148.10010149.10010150</concept_id>
<concept_desc>Computing methodologies~Algebraic algorithms</concept_desc>
<concept_significance>500</concept_significance>
</concept>
<concept>
<concept_id>10002950.10003714.10003727.10003729</concept_id>
<concept_desc>Mathematics of computing~Partial differential equations</concept_desc>
<concept_significance>500</concept_significance>
</concept>
<concept>
<concept_id>10002950.10003714.10003739</concept_id>
<concept_desc>Mathematics of computing~Nonlinear equations</concept_desc>
<concept_significance>300</concept_significance>
</concept>
</ccs2012>
\end{CCSXML}

\ccsdesc[500]{Computing methodologies~Algebraic algorithms}
\ccsdesc[300]{Mathematics of computing~Partial differential equations}
\ccsdesc[300]{Mathematics of computing~Nonlinear equations}
\ccsdesc[300]{Mathematics of computing~Discretization}

\keywords{Partial differential equations, Finite difference approximations, Consistency, Thomas decomposition}

\maketitle

%
%
\section{Introduction}\label{sec:introduction}

Except very special cases, partial differential equations (PDE) admit numerical integration only.
Historically first and one of the most-used numerical methods is finite difference method~\cite{Samarskii'01} based on approximation of PDE by difference equations defined on a chosen solution grid. To construct a numerical solution, the obtained finite difference approximation (FDA) to PDE is augmented with an appropriate discretization of initial or/and boundary condition(s) providing uniqueness of solution. As this takes place, the quality of numerical\! solution to PDE\! is\! determined by the quality of its\! FDA.

Any reasonable discretization must provide the convergence of a numerical solution to a solution of PDE in the limit when the grid spacings tend to zero. However, except for a very limited class of problems, convergence cannot be directly established. In practice, for a given FDA, its consistency and stability are analyzed as the necessary conditions for convergence. Consistency implies reduction of the FDA to the original PDE when the grid spacings tend to zero and stability provides boundedness of the error in the solution under small perturbation in the numerical data.

One of the most challenging problems is to construct FDA which, on the one hand, approximates the PDE and, on the other hand, mimics basic algebraic properties and preserves the algebraic structure~\cite{Christiansen'11} of the PDE. Such mimetic or algebraic structure preserving FDA are more likely to produce highly accurate and stable numerical results (cf.~\cite{JCP'14}). In~\cite{GR'10,G'12}, for polynomially nonlinear PDE systems and regular solution grids, we introduced the novel concept of \emph{strong consistency}, or \emph{s-consistency}, which strengthens the concept of consistency and means that any element of the perfect difference ideal generated by the polynomials in FDA approximates an element in the
 radical differential ideal generated by the polynomials in PDE. In the subsequent papers~\cite{ABGLS'13,ABGLS'17}, by computational experiments with two-dimensional incompressible Navier-Stokes equations, it was shown that s-consistent FDA have much better numerical behavior than FDA which are not s-consistent.

For linear PDE one can algorithmically verify \cite{GR'10}
 s-consistency of their FDA. In the nonlinear case such verification~\cite{G'12} required computation of a difference \Gr basis for FDA. Since difference polynomial rings~\cite{Levin'08} are non-Noetherian, the difference \Gr basis algorithms~\cite{G'12,GLS'15} do not terminate in general.
 In comparison to differential algebra, fewer computational results have been obtained in difference algebra.
 A decomposition technique was developed only for binomial perfect difference ideals \cite{BinomialDifference}.
More generally, in the present paper, a difference analogue of the differential Thomas decomposition~\cite{Thomas'37-62,BGLHR'12,Robertz6,GLHR'18} is obtained (see Section~\ref{sec:Decomposition}), which provides an
algorithmic tool for s-consistency analysis of FDA to simple PDE subsystems
on Cartesian grids (see Section~\ref{sec:Discretization}). In particular, given an FDA to the momentum and continuity equations  in the Navier-Stokes PDE system for incompressible flow, our approach derives an s-consistent approximation containing the pressure Poisson equation (see Section~\ref{sec:NavierStokes}).

\noindent
 Completion to involution is the cornerstone of the differential Thomas decomposition~\cite{Thomas'37-62,BGLHR'12,Robertz6,GLHR'18}. The underlying completion algorithm~\cite{G'05} is based on the theory of Janet division and Janet bases \cite{G'05,Seiler'10,Robertz6} which stemmed from the Riquier-Janet theory~\cite{Riquier'10,Janet'29} of orthonomic PDE. Joseph M. Thomas~\cite{Thomas'37-62} generalized the Riquier-Janet theory to non-orthonomic polynomially nonlinear PDE and showed how to decompose them into the triangular subsystems with disjoint solution sets. Janet bases are \Gr ones with additional structure, and Wu Wen-tsun was the first who showed~\cite{Wu'90} that the Riquier-Janet theory can be used for algorithmic construction of algebraic \Gr bases. We dedicate this paper to commemoration of his Centennial Birthday.

\section{Consistency}\label{sec:consistency}

In the given paper we consider PDE systems of the form
\begin{equation}
f_1=\cdots=f_s=0\,,\quad F:=\{f_1,\ldots,f_s\}\subset {\cR}\,,\quad s\in \Z_{\geq 1}\,, \label{pde}
\end{equation}
where ${\cR}:={\cK}\{ \mathbf{u} \}$ is the ring of polynomials in the dependent variables $\mathbf{u}:=\{u^{(1)},\ldots,u^{(m)}\}$ and their partial derivatives
obtained from the operator power products in $\{\partial_1,\ldots,\partial_n\}$
$(\partial_j=\partial_{x_j})$. We shall assume that coefficients of the polynomials are rational functions in $\mathbf{a}:=\{a_1,\ldots,a_l\}$,
finitely many parameters (constants), over $\Q$,
i.e. ${\cK}:={\Q}(\mathbf{a})$. One can also extend the last field to ${\Q}(\mathbf{a},\mathbf{x})$, where $\mathbf{x}:=\{x_1,\ldots,x_n\}$ is the set of independent variables. In this case we shall assume that coefficients of the differential polynomials in $F$ do not vanish in the grid points defined in (\ref{grid}) below.

To approximate
(\ref{pde}) by a difference system we
define a Cartesian computational grid (mesh) with spacing $0 < h \in \R$
and fixed $\mathbf{x}$
by
\begin{equation}\label{grid}
\{\,(x_1+k_1h,\ldots,x_n+k_nh)\mid k_1, \ldots, k_n \in \Z\,\}\,,
\end{equation}
If the actual solution to
(\ref{pde}) is
$\mathbf{{u}}(\mathbf{x})$, then its approximation at the grid nodes will be denoted by
$\mathbf{\tilde{u}}_{k_1,\ldots,k_n}\approx \mathbf{u}(x_1+k_1h,\ldots,x_n+k_nh)$.

Let $\tilde{\cK} := \Q\,(\mathbf{a},h)$ and $\tilde{\cR}$ be
the \emph{difference polynomial ring} over $\tilde{\cK}$,
where $\tilde{\cK}$ is a difference field of constants~\cite{Levin'08} with differences $\setauto := \{\sigma_1,\ldots,\sigma_n\}$ acting on a grid function $\tilde{u}^{(\alpha)}_{k_1,\ldots,k_n}$ as the shift operators
\begin{equation}
\sigma_i^{\pm 1}
\tilde{u}^{(\alpha)}_{k_1,\ldots,k_i,\ldots,k_n} \, = \, \tilde{u}^{(\alpha)}_{k_1,\ldots,k_i\pm 1,\ldots,k_n}\,,\quad  \alpha\in \{1,\ldots,m\}\,.
\label{rs-operators}
\end{equation}
The elements in $\tilde{\cR}$ are polynomials in the dependent variables $\tilde{u}^{(\alpha)}$ ($\alpha=1,...,m$) defined on the grid and in their shifts $\sigma_1^{i_1}...\sigma_n^{i_n} \tilde{u}^{(\alpha)}$  $(i_j\in \Z)$.
However, to provide termination of the decomposition algorithm of Sect.~\ref{sec:Decomposition},
we shall consider difference polynomials with non-ne\-ga\-tive shifts only.
We denote by $\Mon(\setauto)$ the set of monomials in $\automorphism_1$, ..., $\automorphism_n$.
The coefficients of the polynomials are in $\tilde{\cK}$.

The standard method to obtain FDA of such type to the differential system \eqref{pde} is replacement of the partial derivatives occurring in \eqref{pde} by finite differences and application of appropriate power product of the forward-shift operators in \eqref{rs-operators} to eliminate negative
shifts in indices which may come out of
expressions like
\[
\partial_j u^{(\alpha)}(\mathbf{x}) \, = \, \frac{u^{(\alpha)}_{k_1,\ldots,k_j+1,\ldots,k_n}-u^{(\alpha)}_{k_1,\ldots,k_j-1,
\ldots,k_n}}{2h}+\mathcal{O}(h^2)\,.
\]
Furthermore, the difference system
\begin{equation}
\tilde{f}_1=\cdots=\tilde{f}_s=0\,,\quad \tilde{F}:=\{\tilde{f}_1,\ldots,\tilde{f}_s\}\subset
 \tilde{\cR}\,, \label{fda}
\end{equation}
is called an \emph{FDA} to PDE~\eqref{pde} if it is
consistent in accordance to:

\begin{definition}\label{def-wcons}
Given a PDE system~\eqref{pde}, a difference system~\eqref{fda} is \emph{weakly consistent} or \emph{w-consistent} with \eqref{pde} if
\begin{equation*}\label{w-consistency}
   (\,\forall\,j \in \{\,1,\ldots,s\,\}\,) \ \
 [\,\tilde{f}_j\xrightarrow[h\rightarrow 0]{}  f_j\,]\,.
\end{equation*}
\end{definition}
This is a universally adopted notion of consistency for a finite difference discretization of PDE system~\eqref{pde} (cf.~\cite{Str'04}, Ch.7) and means that Eq.~\eqref{fda} reduces to Eq.~\eqref{pde} when the mesh step $h$ tends to zero.

\begin{definition}{\cite{GR'10}}\,
We say that a difference equation $\tilde{f}(\mathbf{\tilde{u}})=0$, $\tilde{f}\in \tilde{\cR}$, implies the differential equation
  $f(\mathbf{\mathbf{u}})=0$, $f\in {\cR}$,
and write
$\tilde{f}\rhd f$,
if the Taylor expansion of $\tilde{f}$ about the grid point $\mathbf{x}$, after clearing denominators containing $h$, yields
\begin{equation}\label{implication}
  \tilde{f}(\mathbf{\tilde{u}})=h^d f(\mathbf{u})+\mathcal{O}\,(\,h^{d+1}\,)\,,\quad d\in \Z_{\geq 0}\,,
\end{equation}
and $\mathcal{O}\,(\,h^{d+1}\,)$ denotes terms whose degree in $h$ is at least $d+1$.
\label{DefImplication}
\end{definition}

\begin{remark}\label{rem:continuouslimit}
Given $\tilde{f}(\tilde{\mathbf{u}})$, computation of $f(\mathbf{u})$ is straightforward and has been implemented as routine {\sc ContinuousLimit} in the Maple package {\sc LDA} \cite{GR'12,GLS'15}  ({\underline{L}inear \underline{D}ifference \underline{A}lgebra}).
\end{remark}

\begin{definition}{\cite{G'12}}
FDA~\eqref{fda} to PDE system~\eqref{pde} is \emph{strongly consistent} or \emph{s-consistent} if
\begin{equation}\label{def-scon}
(\,\forall \tilde{f}\in \llbracket \tilde{F} \rrbracket\,)\  (\,\exists\, f\in \llbracket F \rrbracket\,)\  \ [\,\tilde{f}\rhd f\,]\,.
\end{equation}
Here $\llbracket \tilde{F} \rrbracket$ and $\llbracket F \rrbracket$ denote
the perfect difference ideal generated by $\tilde{F}$ in $\tilde{\cR}$ and the radical differential ideal generated by $F$ in ${\cR}$.
\label{DefSC}
\end{definition}

\begin{remark}
It is clear that if condition~\eqref{implication} holds, then
 \begin{equation}\label{approximation}
  \frac{\Tilde{f}(\mathbf{\tilde{u}})}{h^d}\xrightarrow[h\rightarrow 0]{}f(\mathbf{u})\,,
 \end{equation}
 that is, $\tilde{f}(\mathbf{\tilde{u}})/h^d$ approximates $f(\mathbf{u})$. Accordingly, condition~\eqref{def-scon} means that, after clearing denominators, each element of $\llbracket \tilde{F} \rrbracket$ approximates an element of $\llbracket F \rrbracket$ in the sense of (\ref{approximation}).
\end{remark}

\begin{lemma}\label{lem:perfectclosure}
Let ${\cI}=[{F}]$ be a differential ideal of ${\cR}$ and $\tilde{\cI}=[\tilde{F} ]$ a difference ideal of
$\tilde{\cR}$ such that
\[
(\,\forall \tilde{f}\in \tilde{\cI} \,)\  (\,\exists\, f\in {\cI} \,)\  \ [\,\tilde{f}\rhd f\,]\,.
\]
Then for the perfect closure $\llbracket \tilde{\cI} \rrbracket$ of $\tilde{\cI}$ in $\tilde{\cR}$ the condition~\eqref{def-scon} holds.
\end{lemma}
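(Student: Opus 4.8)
The plan is to reformulate~\eqref{def-scon} as a statement about the \emph{leading continuous parts} attached by the relation $\rhd$. For nonzero $\tilde f\in\tilde{\cR}$, clearing denominators in $h$ and Taylor expanding about $\mathbf x$ yields a unique expansion $\tilde f=h^d f_0+\mathcal O(h^{d+1})$ with $f_0\in\cR$ nonzero, so by Definition~\ref{DefImplication} there is a \emph{unique} $f_0$ with $\tilde f\rhd f_0$. In this language the hypothesis says that every $\tilde f\in\tilde{\cI}$ has $\tilde f\rhd f_0$ with $f_0\in\cI$, and the goal~\eqref{def-scon} is that every $\tilde f\in\llbracket\tilde{\cI}\rrbracket$ has $\tilde f\rhd f_0$ with $f_0\in\llbracket F\rrbracket=\sqrt{\cI}$, the radical differential ideal. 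Two elementary properties of $\rhd$, read off from the Taylor formalism of \cite{GR'10,G'12}, drive the argument: it is \emph{multiplicative} ($\tilde f\rhd f_0$ and $\tilde g\rhd g_0$ imply $\tilde f\tilde g\rhd f_0g_0$, using that $\cR$ is a domain), and it is \emph{shift-invariant} ($\tilde f\rhd f_0$ implies $\theta\tilde f\rhd f_0$ for every $\theta\in\Mon(\setauto)$, because a shift displaces the base point by $\mathcal O(h)$ and leaves the leading Taylor coefficient unchanged). I would record these two facts first.

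The engine of the proof is the following. Suppose $g\in\tilde{\cR}$ with $g\rhd g_0$ admits a power product of transforms in $\tilde{\cI}$, say $\prod_{j}(\theta_j g)^{k_j}\in\tilde{\cI}$ with $\theta_j\in\Mon(\setauto)$, $k_j\geq 1$, and set $K=\sum_j k_j$. Shift-invariance gives $\theta_j g\rhd g_0$ and then multiplicativity gives $\prod_j(\theta_j g)^{k_j}\rhd g_0^{K}$. Since this product lies in $\tilde{\cI}$, the hypothesis forces $g_0^{K}\in\cI\subseteq\llbracket F\rrbracket$, and as $\llbracket F\rrbracket$ is radical we conclude $g_0\in\llbracket F\rrbracket$. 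Hence every $g$ admitting a transform-power-product in $\tilde{\cI}$ already satisfies the conclusion of~\eqref{def-scon}.

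It remains to feed this engine the structural description of the perfect closure. By the shuffling theory of perfect difference ideals (Cohn), $\llbracket\tilde{\cI}\rrbracket$ is reached from $\tilde{\cI}$ in such a way that every $\tilde f\in\llbracket\tilde{\cI}\rrbracket$ admits a power product of its transforms lying in $\tilde{\cI}$; the engine then yields $\tilde f\rhd f_0$ with $f_0\in\llbracket F\rrbracket$, which is exactly~\eqref{def-scon}. Equivalently, one can argue by transfinite induction along the construction of $\llbracket\tilde{\cI}\rrbracket$ as the smallest perfect difference ideal containing $\tilde{\cI}$, maintaining the invariant ``$\tilde f\rhd f_0\Rightarrow f_0\in\llbracket F\rrbracket$'': the base case is the hypothesis together with $\cI\subseteq\llbracket F\rrbracket$, the step adjoining roots of transform-power-products is the engine, shifts are handled by shift-invariance, and multiplication by $r\in\tilde{\cR}$ by multiplicativity and the ideal property ($rg\rhd r_0g_0\in\llbracket F\rrbracket$).

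The step I expect to be the real obstacle is closure under \emph{addition}. If two summands have leading continuous parts of the same $h$-order that cancel, the leading part of the sum jumps to a strictly higher order and is a priori not controlled by the parts of the summands; equivalently, the leading continuous parts of a difference ideal need not be captured by the leading parts of any chosen set of generators (the familiar fact that an initial ideal is not spanned by the initial forms of generators). The content to be secured is therefore precisely the additive closure underlying the shuffling description: if power products of transforms of $g$ and of $g'$ lie in $\tilde{\cI}$, then so does a power product of transforms of $g+g'$. Here I would use that the set of leading continuous parts of \emph{any} difference ideal is itself a \emph{differential} ideal of $\cR$ (it is closed under each $\partial_i$ since $\sigma_i\tilde f-\tilde f\rhd\partial_i f_0$), and combine this with Cohn's shuffling lemma to propagate the bound $\llbracket F\rrbracket=\sqrt{\cI}$ through the entire closure.
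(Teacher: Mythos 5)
Your proposal follows essentially the same route as the paper's proof: the three properties of $\rhd$ you record (uniqueness of the implied differential polynomial, multiplicativity, shift-invariance), your ``engine'' (a power product of transforms lying in the ideal forces the leading part, via multiplicativity and radicality of $\llbracket F \rrbracket$, into $\llbracket F \rrbracket$), and the induction along the shuffling construction of $\llbracket \tilde{\cI} \rrbracket$ are exactly the ingredients of the paper's argument. However, your intermediate claim that ``every $\tilde{f} \in \llbracket \tilde{\cI} \rrbracket$ admits a power product of its transforms lying in $\tilde{\cI}$'' is not what the shuffling theory provides, and it fails in general: the shuffle of an ideal need not be additively closed (mixed products such as $\tilde{p}^N (\sigma \tilde{q})^N$ defeat the binomial argument that works for ordinary radicals), which is precisely why Cohn's construction must alternate root adjunction with ideal generation, possibly many times. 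So your ``equivalently'' is not an equivalence: if the one-step description were true, the engine alone would finish the proof; the inductive formulation (the correct one, and the one the paper uses) re-introduces the ideal-generation step at every stage.

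That ideal-generation step is the genuine gap, and you identify it yourself as ``the real obstacle'' but do not close it; moreover, the repair you sketch cannot close it. Your observation that the leading continuous parts of a difference ideal form a differential ideal of $\cR$ is correct (the argument via $(\sigma_i \tilde{f} - \tilde{f})/h \rhd \partial_i f_0$ is fine, and rescaling by powers of $h$ handles sums), but it bounds that set of leading parts from \emph{below} by closure properties, whereas what is needed is a bound from \emph{above}: containment in $\llbracket F \rrbracket$. Knowing that the generators of an intermediate ideal have leading parts in $\llbracket F \rrbracket$, and that the leading parts of the generated ideal form \emph{some} differential ideal, does not prevent that differential ideal from being strictly larger than $\llbracket F \rrbracket$ --- cancellation of leading parts in a sum is exactly the mechanism that could enlarge it. Note also that the lemma's hypothesis quantifies over \emph{all} of $\tilde{\cI}$, not just generators, so the first stage costs nothing; the burden lies entirely at the later shuffling stages, where only the old ideal and the newly adjoined roots are controlled. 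The paper meets this step with a different device, which is the ingredient missing from your proposal: it takes a reduced difference Gr\"obner basis $\tilde{G}$ of the ideal in question, writes an arbitrary element $\tilde{f}$ as $\sum a_{\tilde{g},\mu} \, \sigma^{\mu} \tilde{g}$ with $\lm(a_{\tilde{g},\mu} \sigma^{\mu} \tilde{g}) \preceq \lm(\tilde{f})$, and concludes that in the continuous limit $\tilde{f}$ implies a differential polynomial of the form $\sum b_{g,\nu} \, \partial^{\nu} g$ with $\tilde{g} \rhd g$, hence an element of the differential ideal generated by the limits of the basis elements; it then asserts that the same holds at each enlargement arising during shuffling. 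Whatever one thinks of the level of detail there, that Gr\"obner-representation argument is the paper's answer to the cancellation problem, and your proof has nothing in its place.
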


\begin{proof}
Let $\tilde{G}$ be a (possibly infinite) reduced \Gr basis of $\tilde{\cI}$ for an admissible monomial ordering $\succ$ (cf.~\cite{G'12}). Then
\[
\tilde{f}=\sum_{\tilde{g}\in \tilde{G}_1\subseteq \tilde{G}} \sum_{\mu} a_{\tilde{g},\mu} \sigma^\mu {\tilde{g}}\,,\quad a_{\tilde{g},\mu}\in \tilde{\cR}\,,\quad
\lm(a_{\tilde{g},\mu} \sigma^\mu {\tilde{g}})\preceq \lm(\tilde{f})\,. \label{sb-ideal}
\]
Here $\tilde{f}\in \tilde{\cI}$, $\tilde{G}_1$ is a finite subset of $\tilde{G}$, $\lm$ denotes the {\em leading monomial} of its argument, and we use the multi-index notation
\[
\mu:=(\mu_1,\ldots,\mu_n)\in \Z^n_{\geq 0}\,, \quad \sigma^\mu:=\sigma_1^{\mu_1}\cdots \sigma_n^{\mu_n}\,.
\]
In the continuous limit $\tilde{f}$ implies the differential polynomial
\[
 f:=\sum_{g\in G_1} \sum_{\nu}b_{g,\nu} \partial^\nu g\,,\quad b_{g,\nu}\in {\cR}\,,
\]
where $\tilde{G}_1\rhd G_1$. Therefore, $\tilde{f} \rhd f\in [F]\subseteq \llbracket F \rrbracket$.

Let now $\tilde{p}\in \llbracket \tilde{F} \rrbracket \setminus [\tilde{F}]$ and $\theta_1,\ldots,\theta_r\in \Mon(\setauto)$  be such that
\begin{equation}
\tilde{q}:=(\theta_1 \tilde{p})^{k_1}\cdots (\theta_r \tilde{p})^{k_r}\in [\tilde{F}]\,, \quad k_1,\ldots,k_r \in \Z_{\geq 0}\,.
\label{shuffling}
\end{equation}
From~Eq.~\eqref{shuffling} it follows that $\tilde{q} \, \rhd \, q=p^{k_1+\cdots+ k_r}$ where $\tilde{p} \, \rhd \, p$. Hence, $p\in \llbracket F\rrbracket$. The perfect ideal $\llbracket \tilde{F} \rrbracket$ can be constructed \cite{Levin'08} from $[\tilde{F}]$ by the procedure in the form called {\em shuffling} and based on enlargement of the generator set $\tilde{F}$ with all polynomials $\tilde{p}$ occurring in $[\tilde{F}]$ in the form of Eq.~\eqref{shuffling} and on repetition of such enlargement. It is clear that each such enlargement of the intermediate ideals yields in the continuous limit a subset of $\llbracket F \rrbracket$.
\end{proof}

\noindent
The criterion of s-consistency is given by the following theorem.

\begin{theorem}{\em\cite{G'12}}
A difference approximation (\ref{fda}) to a differential system (\ref{pde}) is s-consistent if and only if a reduced \Gr basis  $\tilde{G}\subset \tilde{\cR}$ of the difference
ideal $[\tilde{F}]\subset \tilde{\cR}$ generated by $\tilde{F}$ satisfies
\begin{equation*}
(\,\forall \tilde{g}\in \tilde{G}\,)\ (\,\exists\, g\in \llbracket F \rrbracket\,)\  [\,\tilde{g}\rhd g\,]\,. \label{cons-gb}
\end{equation*}
\label{th:s-cons}
\end{theorem}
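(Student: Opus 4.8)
The statement is an equivalence whose two directions are very unequal in difficulty, so I would dispatch the forward implication first and concentrate on the converse. Assume that the approximation~\eqref{fda} is s-consistent, i.e.\ that Definition~\ref{DefSC} holds. Every element $\tilde{g}$ of the reduced \Gr basis $\tilde{G}$ lies in $[\tilde{F}]\subseteq\llbracket\tilde{F}\rrbracket$, so instantiating the universally quantified $\tilde{f}$ in Definition~\ref{DefSC} at $\tilde{f}=\tilde{g}$ immediately produces a $g\in\llbracket F\rrbracket$ with $\tilde{g}\rhd g$. Thus the \Gr criterion holds, and no property of $\tilde{G}$ beyond $\tilde{G}\subseteq\llbracket\tilde{F}\rrbracket$ is needed in this direction.

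For the converse I would reduce to Lemma~\ref{lem:perfectclosure}, re-read with the radical differential ideal $\llbracket F\rrbracket$ in place of the differential ideal $[F]$. The \Gr criterion is exactly the statement $\tilde{G}\rhd G$ with $G\subseteq\llbracket F\rrbracket$, and this is the single ingredient that the proof of Lemma~\ref{lem:perfectclosure} consumes: given $\tilde{f}\in[\tilde{F}]$ one expands $\tilde{f}=\sum_{\tilde{g},\mu}a_{\tilde{g},\mu}\,\sigma^\mu\tilde{g}$ over a finite $\tilde{G}_1\subseteq\tilde{G}$ with $\lm(a_{\tilde{g},\mu}\sigma^\mu\tilde{g})\preceq\lm(\tilde{f})$ and passes to the continuous limit. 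Two elementary facts drive this: a forward shift leaves the leading term of an implication unchanged, so $\tilde{g}\rhd g$ forces $\sigma^\mu\tilde{g}\rhd g$ (evaluating $\sigma^\mu\tilde{g}$ at $\mathbf{x}$ is evaluating $\tilde{g}$ at $\mathbf{x}+\mu h$, and $g(\mathbf{x}+\mu h)=g(\mathbf{x})+\mathcal{O}(h)$); and $\llbracket F\rrbracket$, being a differential ideal, is closed under the derivations $\partial^\nu$, under multiplication by elements of $\cR$, and under addition, so the limit $f=\sum_{g,\nu}b_{g,\nu}\partial^\nu g$ lies in $\llbracket F\rrbracket$ and satisfies $\tilde{f}\rhd f$. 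The shuffling step of the lemma then lifts this from $[\tilde{F}]$ to the perfect closure $\llbracket\tilde{F}\rrbracket$: a relation $(\theta_1\tilde{p})^{k_1}\cdots(\theta_r\tilde{p})^{k_r}\in[\tilde{F}]$ forces $\tilde{p}\rhd p$ with $p^{k_1+\cdots+k_r}\in\llbracket F\rrbracket$, whence $p\in\llbracket F\rrbracket$ because $\llbracket F\rrbracket$ is radical. This is precisely Definition~\ref{DefSC}.

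Two points deserve care. First, the only genuinely new feature relative to Lemma~\ref{lem:perfectclosure} is the replacement of $[F]$ by $\llbracket F\rrbracket$; I would check that every property of $[F]$ used in that proof is shared by $\llbracket F\rrbracket$ — closure under $\partial^\nu$, multiplication and addition for the first step, and closure under taking roots for the shuffling step — all of which $\llbracket F\rrbracket$ enjoys, so the substitution is harmless and $\llbracket\llbracket F\rrbracket\rrbracket=\llbracket F\rrbracket$ keeps the target ideal fixed. Second, the step I expect to be delicate, and the reason Lemma~\ref{lem:perfectclosure} is invoked rather than re-proved, is controlling cancellation of the lowest-order-in-$h$ terms when forming the continuous limit of $\sum_{\tilde{g},\mu}a_{\tilde{g},\mu}\sigma^\mu\tilde{g}$: if the $h^d$-coefficients of several summands cancel, the surviving leading differential polynomial sits at a strictly higher order in $h$, and one must argue it still belongs to $\llbracket F\rrbracket$. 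This is exactly where admissibility of the ordering $\succ$ and the normal-form bound $\lm(a_{\tilde{g},\mu}\sigma^\mu\tilde{g})\preceq\lm(\tilde{f})$ enter, keeping the \Gr representation compatible with the $h$-filtration; since this analysis is already carried out in the proof of Lemma~\ref{lem:perfectclosure}, I would cite it rather than repeat it.
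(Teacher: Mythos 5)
Your proposal is correct and matches the paper's treatment: the paper states Theorem~\ref{th:s-cons} without a standalone proof (citing \cite{G'12}), but the argument it supplies through Lemma~\ref{lem:perfectclosure} is exactly the one you assemble --- the forward direction being the trivial instantiation of Definition~\ref{DefSC} at the elements $\tilde{g}\in\tilde{G}\subseteq[\tilde{F}]\subseteq\llbracket\tilde{F}\rrbracket$, and the converse being the lemma's Gr\"obner-representation-plus-shuffling argument run with the radical differential ideal $\llbracket F\rrbracket$ in the role of $[F]$, which is legitimate since $\llbracket F\rrbracket$ is a differential ideal closed under taking roots and $\llbracket\llbracket F\rrbracket\rrbracket=\llbracket F\rrbracket$. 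The lowest-order cancellation issue you flag is treated no more explicitly in the paper's proof of Lemma~\ref{lem:perfectclosure} than in your write-up, so deferring that point to the lemma is consistent with the paper's own level of detail.
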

\vspace{-1em}

%
%
\section{Janet division}\label{sec:janetdivision}

We recall the concept of Janet division.
For details we refer to, e.g., \cite[Subsect.~2.1.1]{Robertz6}, \cite{G'05}, \cite[Ch.~3]{Seiler'10}.

Let $K$ be a field and $R := K[\var_1, \ldots,
 \var_n]$
the commutative polynomial algebra over $K$ with indeterminates
$\var_1$, \ldots, $\var_n$. We denote by $\Mon(R)$ the set of monomials
in $\var_1$, \ldots, $\var_n$ and for a subset $\mu \subseteq \{ \var_1, ..., \var_n \}$
we define $\Mon(\mu)$ to be the subset of $\Mon(R)$ consisting of the monomials involving only
indeterminates from $\mu$.

If a term ordering on $R$ is fixed and $I$ is an ideal of $R$,
then the set of leading monomials of non-zero polynomials in $I$ are known to form a set with the
following property:

\begin{definition}
A set $M \subseteq \Mon(R)$ is said to be
\emph{$\Mon(R)$-multiple-closed} if we have $r m \in M$
for all $m \in M$ and all $r \in \Mon(R)$.
\end{definition}

The smallest $\Mon(R)$-multiple-closed set in $\Mon(R)$ containing
a given set $G \subseteq \Mon(R)$ is denoted by $\multclos{G}$.
It is well known that every $\Mon(R)$-multiple-closed set in $\Mon(R)$
is finitely generated in that sense and that it has a unique
minimal generating set.

We adopt Janet's approach \cite{Janet'29}
of partitioning
a $\Mon(R)$-multiple-closed set $M$ into finitely many
subsets of the form $\Mon(\mu) \, m$, where $m \in M$ and $\mu = \mu(m, M) \subseteq \Mon(R)$
(referred to as Janet division).

\begin{definition}
Let $G \subset \Mon(R)$ be finite and $m = \var_1^{i_1} \cdots \var_n^{i_n} \in G$. Then
$\var_k$ is said to be a \emph{multiplicative variable} for $m$ if and only if
\[
i_k = \max \, \{ \, j_k \mid \var_1^{j_1} \cdots \var_n^{j_n} \in G \mbox{ with }
j_1 = i_1, \, \ldots, \, j_{k-1} = i_{k-1} \, \}\,.
\]
This yields
a partition $\{ \var_1, \ldots, \var_n \} = \mu(m, G) \uplus \overline{\mu}(m, G)$,
where the elements of $\mu(m, G)$ (resp.\ $\overline{\mu}(m, G)$) are the
multiplicative (resp.\ non-mul\-ti\-pli\-ca\-tive) variables for $m$.
The set $G$ is \emph{Janet complete} if
\[
\multclos{G} := \bigcup_{m \in G} \Mon(R) \, m = \biguplus_{m \in G} \Mon(\mu(m, G)) \, m\,.
\]
\end{definition}

\begin{proposition}\label{prop:Janetcompletion}
For every $\Mon(R)$-multiple-closed set $M$
there exists a finite Janet complete set $J \subset \Mon(R)$
such that $M = \multclos{J}$.
\end{proposition}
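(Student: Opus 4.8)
The plan is to argue by induction on the number $n$ of variables, peeling off $\var_1$ and reducing to a Janet completion problem in $\var_2,\ldots,\var_n$. For $n=1$ the claim is immediate: a multiple-closed set in one variable is $\{\var_1^i \mid i\ge d\}$ for some $d$, and $J=\{\var_1^d\}$ works (here $\var_1$ is multiplicative, so its cone is all of $M$); the case $n=0$ is trivial. For the inductive step I would slice $M$ by the degree in $\var_1$: for $e\in\Z_{\ge 0}$ set
\[
M_e := \{\, m'\in\Mon(K[\var_2,\ldots,\var_n]) \mid \var_1^e\, m'\in M \,\}\,.
\]
Each $M_e$ is multiple-closed in $\Mon(K[\var_2,\ldots,\var_n])$, and since $M$ is multiple-closed one has $M_0\subseteq M_1\subseteq\cdots$. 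Using finite generation of $\bigcup_e M_e$ (each of its finitely many minimal generators lies in some $M_e$), this ascending chain stabilizes: there is $D\in\Z_{\ge 0}$ with $M_D=M_{D+1}=\cdots$.

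By the induction hypothesis I obtain, for each $e\in\{0,\ldots,D\}$, a finite Janet complete set $J_e\subset\Mon(K[\var_2,\ldots,\var_n])$ with $\multclos{J_e}=M_e$. I then lift and assemble
\[
J := \bigcup_{e=0}^{D}\{\, \var_1^e\, m' \mid m'\in J_e \,\}\,,
\]
which is finite and satisfies $\multclos{J}=M$: each $\var_1^e m'$ lies in $M$, and conversely every $\var_1^e m'\in M$ with $m'\in\Mon(K[\var_2,\ldots,\var_n])$ is covered (if $e<D$ it sits in the $e$-slice, and if $e\ge D$ then $m'\in M_D$, so $\var_1^{e-D}\cdot\var_1^D m'$ reaches it).

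The heart of the argument — and the step I expect to be the main obstacle — is to check that the multiplicative variables assigned to $J$ by the $n$-variable rule match the recursive structure, so that the Janet cones of $J$ reproduce $M$ disjointly. The key observations are: the largest $\var_1$-degree occurring in $J$ is $D$, so $\var_1$ is multiplicative for $\var_1^e m'$ exactly when $e=D$; and for $k\ge 2$, the elements of $J$ agreeing with $\var_1^e m'$ in the $\var_1$-degree are precisely $\{\var_1^e m''\mid m''\in J_e\}$, so $\var_k$ is multiplicative for $\var_1^e m'$ in $J$ iff it is multiplicative for $m'$ in $J_e$. Consequently $\mu(\var_1^e m', J)\cap\{\var_2,\ldots,\var_n\}=\mu(m',J_e)$, with $\var_1$ adjoined precisely for $e=D$.

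Granting this bookkeeping, the cone of $\var_1^e m'$ equals $\var_1^e\,\Mon(\mu(m',J_e))\,m'$ when $e<D$ and $\Mon(\{\var_1\})\,\var_1^D\,\Mon(\mu(m',J_D))\,m'$ when $e=D$. Taking the union over $J$ and using that each $J_e$ is Janet complete (so $\biguplus_{m'\in J_e}\Mon(\mu(m',J_e))\,m'=M_e$), the cones assemble into
\[
\Big(\biguplus_{e=0}^{D-1}\var_1^e M_e\Big)\ \uplus\ \Big(\Mon(\{\var_1\})\,\var_1^D M_D\Big)=M\,,
\]
a disjoint union because the pieces have distinct $\var_1$-degrees and each inner union is disjoint by induction. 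This is exactly the assertion that $J$ is Janet complete with $\multclos{J}=M$. The only genuinely delicate point is the multiplicative-variable matching above; the finiteness is furnished cleanly by the stabilization of the chain $M_0\subseteq M_1\subseteq\cdots$. An alternative route would run Janet's completion procedure on the minimal generating set of $M$, repeatedly adjoining non-multiplicative prolongations that lack a Janet divisor; there the disjointness of cones is automatic, and the main obstacle instead becomes proving termination of that procedure.
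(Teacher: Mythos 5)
Your proof is correct, but it follows a genuinely different route from the paper. The paper does not give a standalone combinatorial proof at all: it obtains Proposition~\ref{prop:Janetcompletion} as a byproduct of the \emph{Janet completion algorithm} --- take the finite minimal generating set $G$ of $M$ (finite generation being the Dickson-type fact quoted just before the proposition) and repeatedly adjoin non-multiplicative multiples until the cones cover $\multclos{G}$, with termination and correctness delegated to the cited reference \cite[Algorithm~2.1.6]{Robertz6}. That is exactly the ``alternative route'' you sketch in your last sentence. Your argument instead proceeds by induction on the number of variables, slicing $M$ by the $\var_1$-degree into multiple-closed sets $M_0 \subseteq M_1 \subseteq \cdots$ in $\Mon(K[\var_2,\ldots,\var_n])$, using finite generation only to stabilize this chain at some index $D$, and assembling $J$ from Janet complete sets for the slices $M_0,\ldots,M_D$. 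Your ``bookkeeping'' step is not merely plausible but provably exact, for the reason you identify: Janet's rule determines multiplicativity of $\var_k$ by comparing only elements that agree in the first $k-1$ exponents, the elements of $J$ of $\var_1$-degree $e$ are precisely $\var_1^e J_e$ (each $J_e$ being free of $\var_1$), and $\var_1$ is multiplicative exactly on the top slice $e = D$ (one should note the trivial case $M = \emptyset$, and that $J_D \neq \emptyset$ otherwise, so $D$ really is the top $\var_1$-degree in $J$). The trade-off: the paper's algorithmic approach yields the minimal Janet completion and the procedure it needs anyway as a subroutine (\algref{Decompose} in Algorithm~\ref{alg:differencedecomp}), at the cost of a termination proof; your induction is self-contained, makes the disjointness of the cones structurally transparent, and in effect re-derives why Janet division is well behaved --- its recursive definition in the first variable --- rather than citing it.
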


If $G \subset \Mon(R)$ is finite, we call
the minimal Janet complete set $J \supset G$ such that $\multclos{J} = \multclos{G}$
the \emph{Janet completion} of $G$. It is obtained algorithmically
by adding certain multiples of elements of $G$ to $G$
(which also proves Proposition~\ref{prop:Janetcompletion}),
cf., e.g., \cite[Algorithm~2.1.6]{Robertz6}.

%
%
\section{Simple Algebraic Systems}\label{sec:simplealgebraicsystems}

Fundamental for both the differential Thomas decomposition (recalled in Sect.~\ref{sec:differentialthomas})
as well as its difference analogue to be introduced in Sect.~\ref{sec:Decomposition}
is the Thomas decomposition of an \emph{algebraic system} $S$
\begin{equation}\label{eq:algebraicsystem}
p_1 = 0\,, \ \ldots\,, \ p_s = 0\,, \ p_{s+1} \neq 0\,, \ \ldots\,, \ p_{s+t} \neq 0 \quad (s, t \in \Z_{\ge 0})
\end{equation}
where $p_1$, \ldots, $p_{s+t} \in R := K[\algvar_1, \ldots, \algvar_n]$.
Here $K$ is a field of characteristic zero with algebraic closure $\overline{K}$, and $R$ is
the commutative polynomial algebra over $K$
with indeterminates $\algvar_1$, \ldots, $\algvar_n$. The solution set of
the algebraic system $S$ in (\ref{eq:algebraicsystem})
is defined to be
\[
\Sol_{\overline{K}}(S) := \{ a \in \overline{K}^n \mid p_i(a) = 0, \, p_{s+j}(a) \neq 0, \, i = 1, ..., s, \, j = 1, ..., t \}\,.
\]
Assuming the indeterminates are ordered as in $\algvar_1 \ranking \algvar_2 \ranking \ldots \ranking \algvar_n$,
a sequence of projections from $\overline{K}^n$ is defined correspondingly by
\[
\pi_i\colon \overline{K}^n \to \overline{K}^{n-i}\colon (a_1, ..., a_n) \mapsto (a_{i+1}, ..., a_n)\,, \, \,
i = 1, 2, ..., n-1\,.
\]
For each $p \in R \setminus K$, this ordering defines
the greatest indeterminate $\ld(p)$ occurring in $p$, referred to as \emph{leader},
the coefficient $\init(p)$ of the highest power of $\ld(p)$ in $p$, called \emph{initial},
and the \emph{discriminant} $\disc(p) := (-1)^{d(d-1)/2} \res(p, \partial p / \partial \ld(p), \ld(p)) \, / \, \init(p)$,
where $d$ is the degree of $p$ in $\ld(p)$ and where $\res$ denotes the resultant.

\begin{definition}\label{de:algebraicsimple}
An algebraic system $S$ as in (\ref{eq:algebraicsystem})
is said to be \emph{simple} if the following four conditions are satisfied.
\begin{enumerate}
\item None of $p_1$, \ldots, $p_s$, $p_{s+1}$, \ldots, $p_{s+t}$ is constant.\label{de:algebraicsimple_1}
\item The leaders of $p_1$, ..., $p_s$, $p_{s+1}$, ..., $p_{s+t}$ are pairwise distinct.\label{de:algebraicsimple_2}
\item For every $r \in \{ p_1, \ldots, p_s, p_{s+1}, \ldots, p_{s+t} \}$,
if $\ld(r) = \algvar_k$,
then the equation $\init(r) = 0$
has no solution in $\pi_k(\Sol_{\overline{K}}(S))$.\label{de:algebraicsimple_3}
\item For every $r \in \{ p_1, \ldots, p_s, p_{s+1}, \ldots, p_{s+t} \}$,
if $\ld(r) = \algvar_k$,
then the equation $\disc(r) = 0$
has no solution in $\pi_k(\Sol_{\overline{K}}(S))$.\label{de:algebraicsimple_4}
\end{enumerate}
(In (\ref{de:algebraicsimple_3}) and (\ref{de:algebraicsimple_4}), we have $\init(r)$, $\disc(r) \in K[\algvar_{k+1}, \ldots, \algvar_n]$.)
\end{definition}

\begin{definition}\label{de:algebraicquasisimple}
An algebraic system $S$ as in (\ref{eq:algebraicsystem}) is
said to be \emph{quasi-simple} if conditions~(\ref{de:algebraicsimple_1})--(\ref{de:algebraicsimple_3})
(but not necessarily (\ref{de:algebraicsimple_4})) are satisfied.
\end{definition}

A \emph{Thomas decomposition} of an algebraic system $S$ as in (\ref{eq:algebraicsystem})
is a finite collection of simple algebraic systems $S_1$, ..., $S_r$
such that $\Sol_{\overline{K}}(S)\!=\!\Sol_{\overline{K}}(S_1) \uplus ... \uplus \Sol_{\overline{K}}(S_r)$.
It can be computed by an algorithm combining Euclidean pseudo-reduction and case distinctions.
For details we refer to \cite{BGLHR'12}, \cite[Subsect.~2.2.1]{Robertz6}, \cite[Sect.~3.3]{Wang'00}.

%
%
\section{Differential Thomas Decomposition}\label{sec:differentialthomas}

A \emph{system\! of polynomial partial differential equations and inequations}
\begin{equation}\label{eq:polypartialdifferential}
f_1 = 0\,, \ \ldots\,, \ f_s = 0\,, \ f_{s+1} \neq 0\,, \ \ldots\,, \ f_{s+t} \neq 0 \quad (s, t \in \Z_{\ge 0})
\end{equation}
is given by
elements $f_1$, \ldots, $f_{s+t}$ of the differential polynomial ring ${\cR}$
in $u^{(1)}$, \ldots, $u^{(m)}$ with commuting derivations
$\setderiv := \{ \partial_1, \ldots, \partial_n \}$.
For $\alpha \in \{ 1, \ldots, m \}$, $J \in (\Z_{\ge 0})^n$ we identify $u^{(\alpha)}_J$
and $\partial_1^{J_1} \cdots \partial_n^{J_n} u^{(\alpha)}$.
Let $\Omega \subseteq \R^n$ be open and connected. The solution set of $S$ on $\Omega$ is
\[
\begin{array}{c}
\Sol_{\Omega}(S) := \{ \, a = (a_1, \ldots, a_m) \mid a_k\colon \Omega \to \C \mbox{ analytic}, \, k = 1, \ldots, m,\\[0.2em]
f_i(a) = 0, \, f_{s+j}(a) \neq 0, \, i = 1, \ldots, s, \, j = 1, \ldots, t \, \}\,.
\end{array}
\]

\begin{definition}\label{de:ranking}
A \emph{ranking} $\ranking$ on ${\cR}$ is a total ordering on the set
\[
\Mon(\setderiv) \, u \, := \, \{ \, u^{(\alpha)}_J \mid 1 \le \alpha \le m, \, J \in (\Z_{\ge 0})^n \, \}
\]
such that for all $j \in \{ 1, \ldots, n \}$, $\alpha$, $\beta \in \{ 1, \ldots, m \}$,
$J$, $K \in (\Z_{\ge 0})^n$
we have $\partial_j u^{(\alpha)} \ranking u^{(\alpha)}$ and,
if $u^{(\alpha)}_{J} \ranking u^{(\beta)}_{K}$, then
$\partial_j u^{(\alpha)}_{J} \ranking \partial_j u^{(\beta)}_{K}$.
A ranking $\ranking$ is \emph{orderly} if
for all $\alpha$, $\beta \in \{ 1, \ldots, m \}$, $J$, $K \in (\Z_{\ge 0})^n$,
$J_1 + \cdots + J_n > K_1 + \cdots + K_n$
implies $u^{(\alpha)}_{J} \ranking u^{(\beta)}_{K}$.
\end{definition}

\begin{example}\label{ex:lexrankings}
Rankings $\ranking_{{\rm TOP},{\rm lex}}$
and $\ranking_{{\rm POT},{\rm lex}}$
on ${\cR}$ are given by
\[
u^{(\alpha)}_J \ranking_{{\rm TOP},{\rm lex}} u^{(\beta)}_K \quad
:\Leftrightarrow \quad
J \ranking_{{\rm lex}} K \, \, \mbox{or} \, \, (\, J = K \mbox{ and } \alpha < \beta \, )
\]
and
\[
u^{(\alpha)}_J \ranking_{{\rm POT},{\rm lex}} u^{(\beta)}_K \quad
:\Leftrightarrow \quad
\alpha < \beta \, \, \mbox{or} \, \, (\, \alpha = \beta \mbox{ and }
J \ranking_{{\rm lex}} K \, )\,,
\]
respectively, where $\ranking_{{\rm lex}}$ compares multi-indices lexicographically.
\end{example}

If a ranking $\ranking$ on ${\cR}$ is fixed, then for each
$f \in {\cR} \setminus {\cK}$ the \emph{leader}, \emph{initial} and
\emph{discriminant} of $f$ are defined as in Section~\ref{sec:simplealgebraicsystems}.
Moreover,
$\sep(f) := \partial f / \partial \ld(f)$
is called the \emph{separant} of $f$.

Janet division associates (with respect to a total ordering of $\setderiv$)
to each $f_i = 0$ with $\ld(f_i) = \theta_i u^{(\alpha)}$
the set $\mu_i := \mu(\theta_i, G_{\alpha}) \subseteq \setderiv$ (resp.\ $\overline{\mu}_i := \setderiv \setminus \mu_i$) of
\emph{admissible} (resp.\
\emph{non-admissible}) \emph{derivations}, where
\[
G_{\alpha} := \{ \, \theta \in \Mon(\setderiv) \mid \theta u^{\alpha} \in \{ \ld(f_1), \ldots, \ld(f_s) \} \, \}\,.
\]
We call $\{ f_1 \!=\! 0, ..., f_s \!=\! 0 \}$ or $T := \{ (f_1, \mu_1), ..., (f_s, \mu_s) \}$
\emph{Janet complete}
if each $G_{\alpha}$ equals its Janet completion, $\alpha = 1$, \ldots, $m$.
Let $r \in {\cR}$.
If some $v \in \Mon(\setderiv) u$ occurs in $r$ for which there exists
$(f, \mu) \in T$ such that $v = \theta \ld(f)$ for some $\theta \in \Mon(\mu)$
and $\deg_v(r) \ge \deg_v(\theta f)$, then
$r$ is \emph{Janet reducible modulo $T$}.
In this case, $(f, \mu)$ is called a
\emph{Janet divisor} of $r$. If $r$ is not Janet reducible modulo $T$,
then $r$ is also said to be \emph{Janet reduced modulo $T$}.
Iterated pseudo-re\-duc\-tions of $r$ modulo $T$
yield its \emph{Janet normal form} $\NF(r, T, \ranking)$, 
a Janet reduced
differential polynomial, as explained in \cite[Algorithm~2.2.45]{Robertz6}.

\begin{definition}\label{de:differentialpassive}
Let $T = \{ \, (f_1, \mu_1), \ldots, (f_s, \mu_s) \, \}$
be Janet complete.
Then $\{ \, f_1 = 0, \ldots, f_s = 0 \, \}$ or $T$ is said to be \emph{passive}, if
\[
\NF(\partial f_i, T, \ranking) \, = \, 0 \qquad \mbox{for all} \quad
\partial \in \overline{\mu}_i = \setderiv \setminus \mu_i\,, \quad i = 1, \ldots, s\,.
\]
\end{definition}

\begin{definition}\label{de:differentialsimple}
Let a ranking $\ranking$ on ${\cR}$ and a total ordering on $\setderiv$ be fixed.
A differential system $S$ as in (\ref{eq:polypartialdifferential})
is said to be \emph{simple} if the following three conditions hold.
\begin{enumerate}
\item $S$ is simple as an algebraic system
(in the finitely many indeterminates occurring in it, ordered by
the ranking $\ranking$).\label{de:differentialsimple_a}
\item $\{ \, f_1 = 0, \, \ldots, \, f_s = 0 \, \}$ is passive.\label{de:differentialsimple_b}
\item The left hand sides $f_{s+1}$, \ldots, $f_{s+t}$
are Janet reduced modulo
the passive differential system $\{ \, f_1 = 0, \, \ldots, \, f_s = 0 \, \}$.\label{de:differentialsimple_c}
\end{enumerate}
\end{definition}

\begin{proposition}[\cite{Robertz6}, Prop.~2.2.50]\label{prop:differentialmembership}
Let $S$ be a simple differential system, defined over ${\cR}$,
as in (\ref{eq:polypartialdifferential}).
Let $E$ be the
differential ideal of ${\cR}$ which is generated by $f_1$, \ldots, $f_s$ and
let $q$ be the product of the initials and separants of all $f_1$,
\ldots, $f_s$. Then the differential ideal
\[
E : q^{\infty} := \{ \, f \in {\cR} \mid q^r \, f \in E \mbox{ for some } r \in \Z_{\ge 0} \, \}
\]
is radical.
Given $f \in {\cR}$, we have $f \in E : q^{\infty}$ if and only if the
Janet normal form
of $f$ modulo $f_1$, \ldots, $f_s$
is zero.
\end{proposition}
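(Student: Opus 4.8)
The plan is to reduce the differential membership question to an algebraic one by exploiting passivity, and then to invoke the theory of simple algebraic systems of Section~\ref{sec:simplealgebraicsystems}. I first dispose of the easier implication. If $\NF(f, T, \ranking) = 0$, then the pseudo-reduction algorithm computing the Janet normal form rewrites $f$ in finitely many steps, each of which multiplies the current polynomial either by an initial $\init(f_i)$ (for a reduction in the same derivative) or by a separant $\sep(f_i)$ (for a reduction against a proper derivative $\theta f_i$, whose leading coefficient with respect to $\theta\,\ld(f_i)$ is precisely $\sep(f_i)$). Hence $q^{N} f = \sum_{i,\theta} a_{i,\theta}\, \theta f_i + \NF(f, T, \ranking)$ for some $N \in \Z_{\ge 0}$ and $a_{i,\theta} \in \cR$, the sum running over derivatives of the $f_i$. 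With the normal form equal to zero this yields $q^{N} f \in E$, i.e.\ $f \in E : q^{\infty}$.

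For the converse, set $g := \NF(f, T, \ranking)$; the displayed relation shows $q^N f \equiv g$ modulo $E$, so it suffices to prove that a Janet-reduced polynomial $g \in E : q^\infty$ must vanish. Here the passivity of $\{\, f_1 = 0, \ldots, f_s = 0 \,\}$ is essential, and the key step is a Rosenfeld-type lemma. Because $T$ is Janet complete and passive, every integrability condition $\NF(\partial f_i, T, \ranking) = 0$ for $\partial \in \overline{\mu}_i$ holds, which renders the system \emph{coherent}. One shows that coherence forces a partially reduced polynomial --- one in which no proper derivative of any leader $\ld(f_i)$ occurs --- to lie in $E : q^\infty$ if and only if it lies in the algebraic saturation $\algidealgen{f_1,\ldots,f_s} : q^\infty$, formed in the polynomial ring in the finitely many derivatives actually occurring. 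Since $g$ is fully Janet reduced it is in particular partially reduced, so $g$ belongs to $\algidealgen{f_1,\ldots,f_s} : q^\infty$ algebraically.

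It remains to treat the algebraic situation, where Definition~\ref{de:algebraicsimple} applies. Because $S$ is simple as an algebraic system, condition~(\ref{de:algebraicsimple_3}) makes the initials invertible on $\Sol_{\overline{K}}(S)$ and condition~(\ref{de:algebraicsimple_4}), which excludes vanishing discriminants, guarantees separability; consequently $\algidealgen{f_1,\ldots,f_s} : q^\infty$ is radical, and an algebraically reduced polynomial lying in it pseudo-reduces to zero and is therefore zero. Applying this to $g$ gives $g = 0$, completing the ``only if'' direction. Radicality of $E : q^\infty$ then follows from the membership test: if $f^{k} \in E : q^\infty$ then, using $q^N f \equiv g$ modulo $E$, one checks $g^{k} \in E : q^\infty$ with $g$ partially reduced; the Rosenfeld-type lemma places $g^{k}$ in the radical algebraic ideal $\algidealgen{f_1,\ldots,f_s} : q^\infty$, whence $g$ lies there too, so $g$ is Janet reduced and in $E : q^\infty$, forcing $g = \NF(g, T, \ranking) = 0$ and hence $f \in E : q^\infty$.

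The main obstacle is the Rosenfeld-type lemma bridging differential and algebraic membership: one must show that every differential combination $\sum a_{i,\theta}\,\theta f_i$ representing a partially reduced element of $E : q^\infty$ can, modulo the saturating factor $q$, be rewritten using only the $f_i$ and their admissible (multiplicative) derivatives, the contributions of the non-multiplicative derivatives being absorbed precisely because the passivity conditions $\NF(\partial f_i, T, \ranking) = 0$ allow each $\partial f_i$ to be replaced by a lower-ranked reduced expression. Controlling this rewriting together with the accompanying powers of $q$ --- typically via a cross-derivative argument organized by the ranking $\ranking$ --- is the technical heart of the proof.
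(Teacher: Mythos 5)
You should first be aware that the paper contains no proof of Proposition~\ref{prop:differentialmembership}: it is imported verbatim from \cite[Prop.~2.2.50]{Robertz6}, so there is no in-paper argument to compare against, and the fair benchmark is the proof in that reference (see also \cite{BGLHR'12}). Measured against that, your reconstruction follows exactly the standard route: the easy direction by tracking pseudo-reduction multipliers, correctly identified as initials $\init(f_i)$ and, for proper derivatives $\theta f_i$, the separants $\sep(f_i)$ (the initial of $\theta f_i$ with respect to its leader $\theta \, \ld(f_i)$), all of which divide powers of $q$; the hard direction via a Rosenfeld-type lemma, using passivity to rewrite contributions of non-admissible prolongations and thereby reduce membership of a partially reduced polynomial in $E : q^{\infty}$ to membership in the algebraic saturation $\algidealgen{f_1, \ldots, f_s} : q^{\infty}$; the algebraic step via simplicity conditions~(\ref{de:algebraicsimple_3}) and~(\ref{de:algebraicsimple_4}) of Definition~\ref{de:algebraicsimple}; and radicality of $E : q^{\infty}$ deduced cleanly (and without circularity) from the membership test applied to $g^k$, using that partial reducedness is preserved under powers. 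This is correct and is essentially the cited proof. The only caveats: your two pillars --- the Rosenfeld-type lemma for passive Janet-complete systems, and the statement that for a simple algebraic system the saturation ideal is radical with reduced members equal to zero --- are invoked rather than proved, and the latter is not actually stated in Section~\ref{sec:simplealgebraicsystems} of this paper, so a self-contained write-up would need explicit references to the corresponding results in \cite{Robertz6} or \cite{BGLHR'12}; as a reconstruction of the external proof, however, your proposal identifies and assembles the key ingredients correctly.
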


\begin{definition}\label{DifferentialThomasDecomposition}
A \emph{Thomas decomposition} of a differential system $S$ as in (\ref{eq:polypartialdifferential})
(with respect to $\ranking$)
is a finite collection of simple differential systems $S_1$, ..., $S_r$ such that
$\Sol_{\Omega}(S) = \Sol_{\Omega}(S_1) \uplus ... \uplus \Sol_{\Omega}(S_r)$.
\end{definition}

For any differential system $S$ as in (\ref{eq:polypartialdifferential}) and any
ranking $\ranking$ on ${\cR}$ a Thomas decomposition of $S$ can be computed algorithmically.
For more details we refer to, e.g., \cite{BGLHR'12}, \cite[Subsection~2.2.2]{Robertz6},
\cite{GLHR'18}.

%
%
\section{\mbox{Decomposition of difference systems}}\label{sec:Decomposition}

A \emph{system $\tilde{S}$ of polynomial partial difference equations and inequations}
\begin{equation}\label{eq:polypartialdifference}
\tilde{f}_1 = 0\,, \ \ldots\,, \ \tilde{f}_s = 0\,, \ \tilde{f}_{s+1} \neq 0\,, \ \ldots\,, \ \tilde{f}_{s+t} \neq 0 \quad (s, t \in \Z_{\ge 0})
\end{equation}
is given by
elements $\tilde{f}_1$, \ldots, $\tilde{f}_{s+t}$
of the difference polynomial ring $\tilde{\cR}$
in $\tilde{u}^{(1)}$, \ldots, $\tilde{u}^{(m)}$
with commuting automorphisms
$\setauto = \{ \automorphism_1, \ldots, \automorphism_n \}$.
For $\alpha \in \{ 1, \ldots, m \}$, $J \in (\Z_{\ge 0})^n$ we identify $\tilde{u}^{(\alpha)}_J$
and $\automorphism_1^{J_1} \cdots \automorphism_n^{J_n} \tilde{u}^{(\alpha)}$.
We denote by $\tilde{S}^{=}$ (resp.\ $\tilde{S}^{\neq}$) the set
$\{ \tilde{f}_1, ..., \tilde{f}_s \}$ (resp.\ $\{ \tilde{f}_{s+1}, ..., \tilde{f}_{s+t} \}$).

A \emph{ranking} on $\tilde{\cR}$ is defined in the same way
as in Definition~\ref{de:ranking} by replacing the action
of $\partial_i$ by the action of $\automorphism_i$
and $\setderiv$ by $\setauto$.

For a subset $L$ of $\tilde{\cR}$ we denote by $\diffidealgen{L}$ the difference ideal of $\tilde{\cR}$ generated by $L$.
Let $E$ be a difference ideal of $\tilde{\cR}$
and $\emptyset \neq Q \subseteq \tilde{\cR}$ be multiplicatively closed and closed under $\automorphism_1$, \ldots, $\automorphism_n$. Then define
\[
E : Q \, := \, \{ \, \tilde{f} \in \tilde{\cR} \mid q \, \tilde{f} \in E \mbox{ for some } q \in Q \, \}\,.
\]
Moreover, for $U \subseteq \Mon(\setauto) \, \tilde{u}$ and $v \in \Mon(\setauto) \, \tilde{u}$ we define
\[
U : v \, := \, \{ \, \theta \in \Mon(\setauto) \mid \theta \, v \in U \, \}\,.
\]

The first algorithm to be introduced performs an auto-reduction of a finite set of difference polynomials.

\begin{algorithm}
\DontPrintSemicolon
\KwInput{$L \subset \tilde{\cR} \setminus \tilde{\cK}$ finite and a ranking $\ranking$ on $\tilde{\cR}$ such that $L = \tilde{S}^{=}$ for some finite difference system $\tilde{S}$ which is quasi-simple as an algebraic system (in the finitely many indeterminates $\tilde{u}^{(\alpha)}_J$ which occur in it, totally ordered by $\ranking$)
}
\KwOutput{$a \in \{ \text{\bf true}, \text{\bf false} \}$ and $L' \subset \tilde{\cR} \setminus \tilde{\cK}$ finite such that
\begin{equation*}\label{eq:LstrichL}
\diffidealgen{L'} : Q = \diffidealgen{L} : Q\,,
\end{equation*}
where $Q$ is the smallest multiplicatively closed subset of $\tilde{\cR}$ containing all $\init(\theta \tilde{f})$, where $\tilde{f} \in L$
and $\theta \in \ld(L \setminus \{ \tilde{f} \}) : \ld(\tilde{f})$, and which is closed under $\automorphism_1$, \ldots, $\automorphism_n$,
and, in case $a = \text{\bf true}$, there exist no $\tilde{f}_1$, $\tilde{f}_2 \in L'$, $\tilde{f}_1 \neq \tilde{f}_2$, such that we have $v := \ld(\tilde{f}_1) = \theta \ld(\tilde{f}_2)$ for some $\theta \in \Mon(\setauto)$ and $\deg_{v}(\tilde{f}_1) \ge \deg_{v}(\theta \tilde{f}_2)$}
$L' \gets L$\;
\While{$\exists \, \tilde{f}_1, \tilde{f}_2 \in L', \, \tilde{f}_1 \neq \tilde{f}_2$ and $\theta \in \Mon(\setauto)$ such that we have $v := \ld(\tilde{f}_1) = \theta \ld(\tilde{f}_2)$ and $\deg_{v}(\tilde{f}_1) \ge \deg_{v}(\theta \tilde{f}_2)$}
{
  $L' \gets L' \setminus \{ \tilde{f}_1 \}$; \, $v \gets \ld(\tilde{f}_1)$\;
  $\tilde{r} \! \gets \! \init(\theta \tilde{f}_2) \, \tilde{f}_1 - \init(\tilde{f}_1) \, v^d \, \theta \tilde{f}_2$, $d \! := \! \deg_{v}(\tilde{f}_1) - \deg_{v}(\theta \tilde{f}_2)$\;
  \If{$\tilde{r} \neq 0$}{
    \Return $(\text{\bf false}, L' \cup \{ \tilde{r} \})$\;
  }
}
\Return $(\text{\bf true}, L')$\;
\caption{\algref{Auto-reduce} for difference algebra\label{alg:autoreducenonlin}}
\end{algorithm}

Since leaders are dealt with in decreasing order with respect to $\ranking$, and no ranking admits infinitely decreasing chains, Algorithm~\ref{alg:autoreducenonlin} \emph{terminates}.
Its \emph{correctness} follows from the definition of $E : Q$.
\smallskip

Janet division associates (with respect to a total ordering of $\setauto$)
to each $\tilde{f}_i = 0$ with $\ld(\tilde{f}_i) = \theta_i \tilde{u}^{(\alpha)}$
the set $\mu_i := \mu(\theta_i, \tilde{G}_{\alpha}) \subseteq \setauto$ (resp.\ $\overline{\mu}_i := \setauto \setminus \mu_i$) of
\emph{admissible}
(\emph{non-admissible}) \emph{automorphisms}, where
\[
\tilde{G}_{\alpha} := \{ \, \theta \in \Mon(\setauto) \mid \theta \tilde{u}^{\alpha} \in \{ \ld(\tilde{f}_1), \ldots, \ld(\tilde{f}_s) \} \, \}\,.
\]
We call $\{ \tilde{f}_1 \!=\! 0, ..., \tilde{f}_s \!=\! 0 \}$ or $T := \{ (\tilde{f}_1, \mu_1), ..., (\tilde{f}_s, \mu_s) \}$
\emph{Janet complete}
if each $\tilde{G}_{\alpha}$ equals its Janet completion, $\alpha = 1$, \ldots, $m$.
Let $\tilde{r} \in \tilde{\cR}$.
If some $v \in \Mon(\setauto) \tilde{u}$ occurs in $\tilde{r}$ for which there exists
$(\tilde{f}, \mu) \in T$ such that $v = \theta \ld(\tilde{f})$ for some $\theta \in \Mon(\mu)$
and $\deg_v(\tilde{r}) \ge \deg_v(\theta \tilde{f})$, then
$\tilde{r}$ is \emph{Janet reducible modulo $T$}.
In this case, $(\tilde{f}, \mu)$ is called a
\emph{Janet divisor} of $\tilde{r}$. If $\tilde{r}$ is not Janet reducible modulo $T$,
then $\tilde{r}$ is also said to be \emph{Janet reduced modulo $T$}.
Iterated pseudo-re\-duc\-tions of $\tilde{r}$ modulo $T$
yield its \emph{Janet normal form} $\NF(\tilde{r}, T, \ranking)$, which is the Janet reduced
difference polynomial $\tilde{r}'$ returned by Algorithm~\ref{alg:janetreducenonlin}.

\begin{algorithm}
\DontPrintSemicolon
\KwInput{$\tilde{r} \in \tilde{\cR}$, $T = \{ \, (\tilde{f}_1, \mu_1), (\tilde{f}_2, \mu_2), \ldots, (\tilde{f}_s, \mu_s) \, \}$, and a ranking $\ranking$ on $\tilde{\cR}$, where $T$ is Janet complete (with respect to $\ranking$)}
\KwOutput{$(\tilde{r}', b) \in \tilde{\cR} \times \tilde{\cR}$ such that
(1) if $\tilde{r} \in \tilde{\cK}$ or $T = \emptyset$, then $\tilde{r}' = \tilde{r}$, $b = 1$,
(2) otherwise $\tilde{r}'$ is Janet-reduced modulo $T$ and
$$
\tilde{r}' + \diffidealgen{\tilde{f}_1, \ldots, \tilde{f}_s} = b \cdot \tilde{r} + \diffidealgen{\tilde{f}_1, \ldots, \tilde{f}_s}\,,
$$
where $b$ is
in the multiplicatively closed set generated by
$$
\bigcup_{i=1}^s \, \{ \, \theta \init(\tilde{f}_i) \mid \theta \in \Mon(\setauto), \, \ld(\tilde{r}) \ranking \theta \ld(\tilde{f}_i) \, \} \cup \{ 1 \}
$$}
$\tilde{r}' \gets \tilde{r}$; \, $b \gets 1$\;
\If{$\tilde{r}' \not\in \tilde{\cK}$}
{
  $v \gets \ld(\tilde{r}')$\;
  \While{$\tilde{r}' \not\in \tilde{\cK}$ and there exist $(\tilde{f}, \mu) \in T$ and $\theta \in \Mon(\mu)$ such that $v = \theta \ld(\tilde{f})$ and $\deg_{v}(\tilde{r}') \ge \deg_{v}(\theta \tilde{f})$}
  {
    $\tilde{r}' \! \gets \! \init(\theta \tilde{f}) \, \tilde{r}' \! - \! \init(\tilde{r}') \, v^d \, \theta \tilde{f}$, $d \! := \! \deg_{v}(\tilde{r}') \! - \! \deg_{v}(\theta \tilde{f})$\;
    $b \gets \init(\theta \tilde{f}) \cdot b$\;
  }
  \For{each coefficient $\tilde{c}$ of $\tilde{r}'$ (as a polynomial in $v$)}
  {
    $(\tilde{r}'', b') \gets$ \algref{Janet-reduce}($\tilde{c}$, $T$, $\ranking$)\;
    replace the coefficient $b' \cdot \tilde{c}$ in $b' \cdot \tilde{r}'$ with $\tilde{r}''$ and replace $\tilde{r}'$ with this result\;
    $b \gets b' \cdot b$\;
  }
}
\Return $(\tilde{r}', b)$\;
\caption{\algref{Janet-reduce} for difference algebra\label{alg:janetreducenonlin}}
\end{algorithm}

Algorithm~\ref{alg:janetreducenonlin} \emph{terminates} because each coefficient $\tilde{c}$ of $\tilde{r}'$ is either constant or has a leader which is smaller than $\ld(\tilde{r}')$ with respect to $\ranking$, and a ranking $\ranking$ does not allow infinitely decreasing chains.
\emph{Correctness} of the algorithm is clear.

\begin{definition}\label{de:differencepassive}
Let $T = \{ \, (\tilde{f}_1, \mu_1), \ldots, (\tilde{f}_s, \mu_s) \, \}$
be Janet complete.
Then $\{ \, \tilde{f}_1 = 0, \ldots, \tilde{f}_s = 0 \, \}$ or $T$ is said to be \emph{passive}, if
\[
\NF(\automorphism \tilde{f}_i, T, \ranking) = 0 \qquad \mbox{for all} \quad
\automorphism \in \overline{\mu}_i = \setauto \setminus \mu_i\,, \quad i = 1, \ldots, s\,.
\]
\end{definition}

\begin{definition}\label{de:differencesimple}
Let a ranking $\ranking$ on $\tilde{\cR}$ and a total ordering on $\setauto$ be fixed.
A difference system $\tilde{S}$ as in (\ref{eq:polypartialdifference})
is said to be \emph{simple} (resp., \emph{quasi-simple}) if the following three conditions hold.
\begin{enumerate}
\item $\tilde{S}$ is simple (resp., quasi-simple) as an algebraic system
(in the finitely many occurring indeterminates, ordered by $\ranking$).\label{de:differencesimple_a}
\item $\{ \, \tilde{f}_1 = 0, \, \ldots, \, \tilde{f}_s = 0 \, \}$ is passive.\label{de:differencesimple_b}
\item The left hand sides $\tilde{f}_{s+1}$, \ldots, $\tilde{f}_{s+t}$
are Janet reduced modulo
the passive difference system $\{ \, \tilde{f}_1 = 0, \, \ldots, \, \tilde{f}_s = 0 \, \}$.\label{de:differencesimple_c}
\end{enumerate}
\end{definition}

\begin{proposition}\label{prop:differencesimplemembership}
Let $\tilde{S}$ be a quasi-simple difference system over $\tilde{\cR}$
as in (\ref{eq:polypartialdifference}).
Let $E$ be the difference ideal of $\tilde{\cR}$
generated by $\tilde{f}_1$, \ldots, $\tilde{f}_s$
and let $Q$ be the smallest subset of $\tilde{\cR}$ which is multiplicatively closed,
closed under $\automorphism_1$, \ldots, $\automorphism_n$ and contains the initials $q_i := \init(\tilde{f}_i)$
for all $i = 1$, \ldots, $s$. Then a difference polynomial $\tilde{f} \in \tilde{\cR}$ is an
element of
\[
\begin{array}{rcl}
E : Q \! & \! = \! & \! \{ \, \tilde{f} \in \tilde{\cR} \mid (\theta_1(q_1))^{r_1} \ldots (\theta_s(q_s))^{r_s} \, \tilde{f} \in E\\[0.2em]
& &\quad \mbox{for some } \theta_1, \ldots, \theta_s \in \Mon(\setauto), \, r_1, \ldots, r_s \in \Z_{\ge 0} \, \}
\end{array}
\]
if and only if the Janet normal form
of $\tilde{f}$ modulo $\tilde{f}_1$, \ldots, $\tilde{f}_s$ is zero.
\end{proposition}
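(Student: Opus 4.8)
The plan is to mirror the structure of the differential analogue (Proposition~\ref{prop:differentialmembership}) while accounting for the two features peculiar to the difference setting: the saturation set $Q$ is generated not just by the initials $q_i$ themselves but by all their shifts $\theta(q_i)$, and Janet completeness of $\tilde{G}_\alpha$ is with respect to the automorphism monoid $\Mon(\setauto)$. I would prove the two implications of the stated equivalence separately.

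For the ``if'' direction (vanishing normal form implies membership), I would invoke the specification of Algorithm~\ref{alg:janetreducenonlin}: running \algref{Janet-reduce} on $\tilde{f}$ modulo $T$ produces a pair $(\tilde{f}', b)$ with
\[
\tilde{f}' + \diffidealgen{\tilde{f}_1, \ldots, \tilde{f}_s} = b \cdot \tilde{f} + \diffidealgen{\tilde{f}_1, \ldots, \tilde{f}_s}\,,
\]
where $b$ lies in the multiplicatively closed set generated by shifts $\theta \init(\tilde{f}_i)$. If $\tilde{f}' = \NF(\tilde{f}, T, \ranking) = 0$, then $b \cdot \tilde{f} \in E$, and since every generator of $b$ is of the form $\theta(q_i)$ it follows that $b \in Q$; hence $\tilde{f} \in E : Q$. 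This direction is essentially bookkeeping, confirming that the multipliers $b$ collected during reduction all belong to the prescribed saturation set.

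The harder ``only if'' direction is the crux. Given $\tilde{f} \in E : Q$, I have $q \, \tilde{f} \in E$ for some $q \in Q$, and I want the normal form to vanish. I would argue by contradiction: suppose $\tilde{g} := \NF(\tilde{f}, T, \ranking) \neq 0$. Since normal-form reduction only multiplies by elements of $Q$, the congruence $\tilde{f}' + \diffidealgen{\tilde{f}_i} = b \tilde{f} + \diffidealgen{\tilde{f}_i}$ shows $b\tilde{f} - \tilde{g} \in E$, so after multiplying by a suitable $q' \in Q$ one still has $q'' \tilde{g} \in E$ for some $q'' \in Q$. The key technical step is then to show that a nonzero \emph{Janet-reduced} difference polynomial cannot satisfy $q'' \tilde{g} \in E = \diffidealgen{\tilde{f}_1, \ldots, \tilde{f}_s}$. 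Here I would use passivity together with Janet completeness: an arbitrary element of the difference ideal $E$ is a $\tilde{\cR}$-linear combination $\sum_{i,\theta} a_{i,\theta}\,\theta \tilde{f}_i$, and the completeness of each $\tilde{G}_\alpha$ partitions the shifts into the disjoint cones $\Mon(\mu_i)\,\ld(\tilde{f}_i)$, so that every leading term produced by such a combination is Janet-divisible by some $(\tilde{f}_i, \mu_i)$. Passivity guarantees that the non-admissible shifts $\automorphism \tilde{f}_i$ with $\automorphism \in \overline{\mu}_i$ reduce to zero, so no ``hidden'' leaders escape the admissible cones; this is exactly what lets one localize the leading behavior of any ideal element to the Janet divisors recorded in $T$. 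Consequently the leader of $q''\tilde{g}$ would have to be Janet-reducible modulo $T$, contradicting that $\tilde{g}$ is reduced (note $q'' \in Q$ cannot cancel this, since multiplication by an initial only raises the relevant degree).

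\textbf{Anticipated main obstacle.} The delicate point is precisely this leader-tracking argument in the non-Noetherian difference setting. In the differential case one appeals to the Rosenfeld-type lemma and a Noetherian reduction of the relevant indeterminates; here I expect the replacement to be a careful induction on the ranking, using the crucial fact --- emphasized in the paper --- that we restrict to \emph{non-negative} shifts, so that the monoid $\Mon(\setauto)$ acting on leaders behaves like $(\Z_{\geq 0})^n$ and the Janet partition remains combinatorially finite. The role of passivity, which rules out non-admissible reductions re-introducing suppressed leaders, is what I expect to require the most care to state precisely; everything else should follow the differential template closely.
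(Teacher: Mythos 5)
Your proposal is correct and follows essentially the same route as the paper's proof: the \emph{if} direction by observing that the multipliers $b$ accumulated by Algorithm~\ref{alg:janetreducenonlin} lie in $Q$, and the \emph{only if} direction by using passivity to successively replace non-admissible shifts $\alpha_{i,j}(\tilde{f}_i)$ in a representation of $q\,\tilde{f}$ by admissible ones (the substitution process terminating because the ranking admits no infinite decreasing chains), whence the leader of any nonzero element of $E:Q$ has a Janet divisor in the passive set. Your contradiction framing via $\tilde{g}=\NF(\tilde{f},T,\ranking)$ is the same argument in disguise, since $\tilde{g}$ itself lies in $E:Q$ and is Janet reduced.
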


\begin{proof}
By definition of
$E : Q$, every element $\tilde{f} \in \tilde{\cR}$
for which Algorithm~\ref{alg:janetreducenonlin}
yields
Janet normal form zero is an element of $E : Q$.

Let $\tilde{f} \in E : Q$, $\tilde{f} \neq 0$.
Then there exist $q \in Q$
and $k_1$, ..., $k_s \in \Z_{\ge 0}$
and $c_{i,j} \in \tilde{\cR} \setminus \{ 0 \}$, $\alpha_{i,j} \in \Mon(\setauto)$, $j = 1$, ..., $k_i$,
$i = 1$, ..., $s$, such that
\begin{equation}\label{eq:linearcombalpha}
q \, \tilde{f} \, \, = \, \, \sum_{i=1}^s \sum_{j=1}^{k_i} c_{i,j} \, \alpha_{i,j}(\tilde{f}_i)\,.
\end{equation}
Among all pairs $(i, j)$ for which $\alpha_{i,j}$ involves a
non-admissible automorphism for $\tilde{f}_i = 0$ let the pair $(i^{\star}, j^{\star})$
be such that $\alpha_{i^{\star}, j^{\star}}(\ld(\tilde{f}_{i^{\star}}))$ is maximal
with respect to the ranking $\ranking$. Let $\automorphism$ be a non-admissible automorphism
for $\tilde{f}_{i^{\star}} = 0$ which divides the monomial $\alpha_{i^{\star},j^{\star}}$.
Since $\{ \tilde{f}_1 = 0, ..., \tilde{f}_s = 0 \}$ is passive, there exist $b \in Q$,
$l_1$, \ldots, $l_s \in \Z_{\ge 0}$
and $d_{i,j} \in \tilde{\cR} \setminus \{ 0 \}$
and $\beta_{i,j} \in \Mon(\setauto)$, $j = 1$, \ldots, $l_i$,
$i = 1$, \ldots, $s$, such that
\[
b \cdot (\automorphism \, \tilde{f}_{i^{\star}}) \, \, = \, \, \sum_{i=1}^s \sum_{j=1}^{l_i} d_{i,j} \, \beta_{i,j}(\tilde{f}_i)\,,
\]
where each $\beta_{i,j}$ involves only admissible automorphisms for $\tilde{f}_i = 0$.
Let $\gamma_{i^{\star}, j^{\star}} := \alpha_{i^{\star}, j^{\star}} / \automorphism$ and
multiply (\ref{eq:linearcombalpha}) by $\gamma_{i^{\star}, j^{\star}}(b)$ to obtain
\[
\gamma_{i^{\star}, j^{\star}}(b) \cdot q \, \tilde{f} \, \, = \, \,
\sum_{i=1}^s \sum_{j=1}^{k_i} c_{i,j} \cdot \gamma_{i^{\star}, j^{\star}}(b) \cdot \alpha_{i,j}(\tilde{f}_i)\,.
\]
In this equation we replace
\[
\gamma_{i^{\star}, j^{\star}}(b) \cdot \alpha_{i^{\star}, j^{\star}}(\tilde{f}_{i^{\star}}) \, \, = \, \,
\gamma_{i^{\star}, j^{\star}}(b \cdot \automorphism(\tilde{f}_{i^{\star}}))
\]
by
\[
\gamma_{i^{\star}, j^{\star}}\left(
\sum_{i=1}^s \sum_{j=1}^{l_i} d_{i,j} \, \beta_{i,j}(\tilde{f}_i) \right).
\]
Since $\gamma_{i^{\star}, j^{\star}} \, \beta_{i^{\star}, j^{\star}}$
involves fewer non-admissible automorphisms for $\tilde{f}_i = 0$ than $\alpha_{i^{\star}, j^{\star}}$,
iteration of this substitution process will rewrite equation (\ref{eq:linearcombalpha})
in such a way that every $\alpha_{i,j}(\ld(\tilde{f}_i))$ involving non-admissible automorphisms
for $\tilde{f}_i = 0$ will be less than $\alpha_{i^{\star}, j^{\star}}(\ld(\tilde{f}_{i^{\star}}))$
with respect to
$\ranking$. A further iteration of this substitution process
will therefore produce an equation as (\ref{eq:linearcombalpha}) with no $\alpha_{i,j}$
involving any non-admissible automorphisms for $\tilde{f}_i = 0$.

This shows that for every $\tilde{f} \in (E : Q) \setminus \{ 0 \}$ there
exists a Janet divisor of $\ld(\tilde{f})$ in the passive set defined
by $\tilde{f}_1 = 0$, \ldots, $\tilde{f}_s = 0$.
\end{proof}

Let $\Omega \subseteq \R^n$ be open and connected and fix $\mathbf{x} \in \Omega$.
Denoting the grid in (\ref{grid}) by $\Gamma_{\mathbf{x},h}$, we define
\[
\begin{array}{l}
\mathcal{F}_{\Omega,\mathbf{x},h} := \{ \, \tilde{u}\colon \Gamma_{\mathbf{x},h} \cap \Omega \to \C \mid
\mbox{$\tilde{u}$ is the restriction to $\Gamma_{\mathbf{x},h} \cap \Omega$ of}\\[0.2em]
\qquad \qquad \qquad \qquad \quad \mbox{some locally analytic function $u$ on $\Omega$} \, \}\,,
\end{array}
\]
and for a system $\tilde{S}$ of partial difference equations and inequations as in (\ref{eq:polypartialdifference})
we define the solution set
\[
\begin{array}{l}
\Sol_{\Omega,\mathbf{x},h}(\tilde{S}) := \{ \, \tilde{u} \in \mathcal{F}_{\Omega,\mathbf{x},h} \mid
\tilde{f}_i(\tilde{u}) = 0, \, \tilde{f}_{s+j}(\tilde{u}) \neq 0 \mbox{ for all}\\[0.2em]
\qquad \qquad \qquad \qquad \qquad \qquad i = 1, \ldots, s, \, j = 1, \ldots, t \, \}\,.
\end{array}
\]

\begin{definition}\label{de:differencedecomposition}
Let $\tilde{S}$ be a finite difference system over $\tilde{\cR}$ and $\ranking$ a ranking on $\tilde{\cR}$.
A \emph{difference decomposition} of $\tilde{S}$ is a finite collection of quasi-simple
difference systems $\tilde{S}_1$, \ldots, $\tilde{S}_r$ over $\tilde{\cR}$ such that
$\Sol_{\Omega,\mathbf{x},h}(\tilde{S}) = \Sol_{\Omega,\mathbf{x},h}(\tilde{S}_1) \uplus \ldots \uplus \Sol_{\Omega,\mathbf{x},h}(\tilde{S}_r)$.
\end{definition}

In the following algorithm, \algref{Decompose} in step~11
refers to an algorithm which computes a smallest superset of $G = \{ \tilde{f}_1, \ldots, \tilde{f}_s \}$ in $\tilde{\cR}$ that is Janet complete as defined on page~\pageref{alg:autoreducenonlin}
(see also Section~\ref{sec:janetdivision}).

\begin{algorithm}
\DontPrintSemicolon\SetCommentSty{textit}
\KwInput{A finite difference system $\tilde{S}$ over $\tilde{\cR}$, a ranking $\ranking$ on $\tilde{\cR}$, and a total ordering on $\setauto$ (used by \algref{Decompose})}
\KwOutput{A difference decomposition of $\tilde{S}$}
$Q \gets \{ \tilde{S} \}$; \, $T \gets \emptyset$\;
\Repeat{$Q = \emptyset$}{
  choose $L \in Q$ and remove $L$ from $Q$\;
  compute a decomposition $\{ A_1, ..., A_r \}$ of $L$, considered as an algebraic system, into quasi-simple systems (cf.\ Sect.~\ref{sec:simplealgebraicsystems})\;
  \For{$i = 1$, \ldots, $r$}{
     \If(\tcp*[f]{no equation and no inequation}){$A_i = \emptyset$}{
      \Return $\{ \emptyset \}$\;
    }
    \Else{
      $(a, G) \gets$ \algref{Auto-reduce}($A_i^{=}$, $\ranking$)\tcp*[r]{Alg.~\ref{alg:autoreducenonlin}}
      \If{$a = \text{\bf true}$}{
        $J \gets$ \algref{Decompose}($G$)\;
        $P \!\gets\! \{ \NF(\automorphism \tilde{f}, J, \ranking) \mid (\tilde{f}, \mu) \in J, \, \automorphism \in \overline{\mu} \}$\tcp*[r]{Alg.~\ref{alg:janetreducenonlin}}
        \If(\tcp*[f]{$J$ is passive}){$P \subseteq \{ 0 \}$}{
          replace each $\tilde{g} \! \neq \! 0$ in $A_i$ with $\NF(\tilde{g}, J, \ranking) \! \neq \! 0$\;
          \If{$0 \not\in A_i^{\neq}$}{
            insert $\{ \tilde{f} = 0 \mid (\tilde{f}, \mu) \in J \} \cup \{ \tilde{g} \neq 0 \mid \tilde{g} \in A_i^{\neq} \}$ into $T$\;
          }
        }
        \ElseIf{$P \cap \tilde{\cK} \subseteq \{ 0 \}$}{
          insert $\{ \tilde{f} = 0 \mid (\tilde{f}, \mu) \in J \} \cup \{ \tilde{f} = 0 \mid \tilde{f} \in P \setminus \{ 0 \} \} \cup \{ \tilde{g} \neq 0 \mid \tilde{g} \in A_i^{\neq} \}$ into $Q$\;
        }
      }
      \Else{
        insert\! $\{ \tilde{f} = 0 \mid \tilde{f} \in G \} \!\cup\! \{ \tilde{g} \neq 0 \mid \tilde{g} \in A_i^{\neq} \}$\! into $Q$\;
      }
    }
  }
}
\caption{\algref{DifferenceDecomposition}\label{alg:differencedecomp}}
\end{algorithm}

\begin{theorem}
Algorithm~\ref{alg:differencedecomp} terminates and is correct.
\end{theorem}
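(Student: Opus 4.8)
The plan is to prove termination and correctness of Algorithm~\ref{alg:differencedecomp} separately, treating the two as distinct tasks that rely on different earlier results.

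\smallskip
\noindent\textbf{Correctness.} First I would establish that each quasi-simple difference system inserted into $T$ contributes a piece of the solution set, and that these pieces are disjoint and exhaust $\Sol_{\Omega,\mathbf{x},h}(\tilde{S})$. The key observation is that the algebraic Thomas (quasi-simple) decomposition in step~4 splits $\Sol$ disjointly at the algebraic level, since the indeterminates $\tilde{u}^{(\alpha)}_J$ occurring in $L$ are finitely many and are totally ordered by $\ranking$; the disjointness of algebraic quasi-simple systems is guaranteed by Section~\ref{sec:simplealgebraicsystems}. I would then argue that the three transformations applied to each branch $A_i$ preserve the solution set of that branch: \algref{Auto-reduce} (Algorithm~\ref{alg:autoreducenonlin}) preserves $\diffidealgen{L} : Q$ by its stated output specification, \algref{Decompose} (Janet completion) only adds multiples lying in the same ideal and hence does not change solutions, and replacing each $\tilde{g}$ in $A_i^{\neq}$ by $\NF(\tilde{g}, J, \ranking)$ is justified because the normal form differs from $\tilde g$ by an element of the difference ideal generated by $J$ times an invertible initial factor (cf.\ the output specification of Algorithm~\ref{alg:janetreducenonlin}), so on $\Sol_{\Omega,\mathbf{x},h}$ of the equations the inequation is unchanged. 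The passivity test via $P$ implements Definition~\ref{de:differencepassive}, and $0 \in A_i^{\neq}$ (or $A_i = \emptyset$) correctly identifies an empty (resp.\ full) contribution. The crucial point is that, by Proposition~\ref{prop:differencesimplemembership}, a \emph{passive} branch produces a genuinely quasi-simple system whose membership in $E : Q$ is decided by Janet normal form zero; this is what makes the output a difference decomposition in the sense of Definition~\ref{de:differencedecomposition}.

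\smallskip
\noindent\textbf{Termination.} This is where the main obstacle lies, because difference polynomial rings are non-Noetherian and, as noted in the introduction, difference \Gr basis algorithms need not terminate. The saving restriction is that only \emph{non-negative} shifts are admitted and that each $L$ is quasi-simple as an \emph{algebraic} system in the finitely many indeterminates it contains. I would argue termination by exhibiting a well-founded measure that strictly decreases along the \texttt{Repeat} loop. When a branch is sent back to $Q$ (the \textbf{Else} at step~23 after \algref{Auto-reduce} returns $\text{\bf false}$, or the \textbf{ElseIf} at step~20 where a nonzero constant-free element of $P$ is adjoined), the newly inserted equations strictly enlarge the set of leaders or strictly lower a leader with respect to $\ranking$; since each individual reduction step (inside Algorithms~\ref{alg:autoreducenonlin} and~\ref{alg:janetreducenonlin}) already terminates because $\ranking$ admits no infinitely decreasing chains, the outer obstacle is to rule out an infinite sequence of re-insertions into $Q$. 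The hard part is precisely to show that the chain of leader sets (equivalently, of $\Mon(\setauto)$-multiple-closed sets generated by the leaders) stabilises. Here I would invoke Dickson's lemma on the monomials in $\automorphism_1, \ldots, \automorphism_n$ (the non-negative-shift restriction makes the leader monomials live in $(\Z_{\ge 0})^n$), so that the ascending chain of $\Mon(\setauto)$-multiple-closed leader sets must eventually become stationary; combined with the finiteness of algebraic quasi-simple decompositions and the strict decrease of the ranking-based measure within a fixed leader configuration, no infinite branch is possible.

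\smallskip
\noindent The anticipated main difficulty is therefore the termination argument: one must confirm that the interplay between the \algref{Decompose} (Janet completion) step, which can introduce new leaders, and the passivity loop, which feeds reduced elements back into $Q$, cannot generate an infinite ascending chain of leader ideals. I expect this to be handled by a Dickson-type finiteness argument on the non-negatively-shifted leader monomials, mirroring the termination proof for the differential Thomas decomposition, with the essential new ingredient being the explicit restriction to non-negative shifts flagged in Section~\ref{sec:consistency}.
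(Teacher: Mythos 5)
Your correctness argument is essentially the paper's (the paper is even terser: everything inserted into $T$ is quasi-simple, and splittings occur only in the algebraic decomposition step, where solutions are neither lost nor gained); your appeal to Proposition~\ref{prop:differencesimplemembership} is not actually needed there, since Definition~\ref{de:differencedecomposition} only demands a partition into quasi-simple systems. The genuine gap is in your termination argument. Your claimed dichotomy --- each re-insertion into $Q$ either strictly enlarges the $\Mon(\setauto)$-multiple-closed set generated by the leaders, or strictly lowers a leader with respect to $\ranking$ --- omits exactly the case that does the real work in the nonlinear setting. An equation $\tilde{f} = 0$ with $\tilde{f} \in P \setminus \{0\}$ adjoined after a failed passivity check is Janet reduced modulo $J$, but for nonlinear difference polynomials this allows a third possibility: $\ld(\tilde{f}) = \theta \ld(\tilde{f}')$ for some $(\tilde{f}', \mu') \in J$ and $\theta \in \Mon(\mu')$, with only a degree drop, $\deg_{\ld(\tilde{f})}(\tilde{f}) < \deg_{\ld(\tilde{f}')}(\tilde{f}')$. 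Here the multiple-closed leader set does not grow at all, and since rankings are compatible with the action of $\setauto$, the new leader satisfies $\ld(\tilde{f}) \succeq \ld(\tilde{f}')$, i.e.\ it is \emph{not} lower in the ranking; it may even be strictly higher. So neither component of your well-founded measure moves in the right direction, and Dickson's lemma together with the non-existence of infinite descending chains in $\ranking$ does not by itself exclude an infinite branch.

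This degree-drop case is precisely where the paper's proof concentrates. It shows the case cannot recur indefinitely: if $\ld(\tilde{f}) = \ld(\tilde{f}')$, a later round performs a pseudo-reduction of $\tilde{f}'$ modulo $\tilde{f}$ or adds $\init(\tilde{f}) = 0$ with a lower-ranked leader, yielding a strictly decreasing sequence in $\Mon(\setauto)$; if the leaders differ, one obtains a sequence of leaders $((\theta_i \, \tilde{u}^{(\alpha)})^{e_i})_{i}$ with $\theta_i \mid \theta_{i+1}$ and $e_{i+1} < e_i$, which is finite because the $e_i$ are non-negative integers. Only after this does the ``new leader'' case become unavoidable, and only then is Dickson's lemma invoked to bound how often it can occur. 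Note also that for the other re-insertion (when \algref{Auto-reduce} returns {\bf false}) the paper does not use a ranking decrease either, but the observation that the shift orders of leaders in the descendants are bounded by those of the ancestor system $L$, so that finitely many algebraic re-decompositions suffice before auto-reducedness holds. To repair your proof you would need to augment your measure by one tracking leader degrees (for instance, the degrees of the equations whose leaders generate the current multiple-closed set), which is in substance the paper's argument.
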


\begin{proof}
Algorithm~\ref{alg:differencedecomp} maintains a set $Q$ of difference systems that still have to be dealt with.
Given that termination of all subalgorithms has been proved, termination of Algorithm~\ref{alg:differencedecomp}
is equivalent to the condition that $Q = \emptyset$ holds after finitely many steps.

Apart from step~1, new systems are inserted into $Q$ in steps~18 and 20.
We consider the systems that are at some point an element of $Q$ as the vertices of a tree.
The root of this tree is the input system $\tilde{S}$. The systems which are inserted into $Q$ in
steps~18 and 20 are the vertices of the tree whose ancestor is the system $L$ that was
extracted from $Q$ in step~3 which in the following steps produced these new systems.
Since the for loop beginning in step~5 terminates, the degree of each vertex in the tree
is finite. We claim that every branch of the tree is finite, i.e., that the tree has finite
height, hence, that the tree has only finitely many vertices.

In case of step~20 the new system contains an equation which resulted from a non-trivial
difference reduction in step~9. When this new system will be extracted from $Q$ in a later
round, a decomposition into quasi-simple algebraic systems
will be computed in step~4. This may
produce new branches of the tree, but along any of these branches, after finitely many steps
the condition $a = $ true in step~10 will hold, because the order of the shifts in
leaders of the arising equations is bounded by the maximum order of shifts in leaders
of the ancestor system $L$.

In case of step~18 we are going to show that after finitely many steps a difference equation
is obtained whose leader has not shown up as a leader of an equation in any preceding
system in the current branch of the tree. First of all,
the passivity check (step~12) yielded an equation $\tilde{f} = 0$,
$\tilde{f} \in P \setminus \tilde{\cK}$, which is Janet reduced modulo $J$. Hence, either
$\ld(\tilde{f})$ is not contained in the multiple-closed set generated by $\ld(G)$, or
there exists $(\tilde{f}', \mu') \in J$ such that $\ld(\tilde{f}')$ is a Janet divisor of $\ld(\tilde{f})$,
but the degree of $\tilde{f}$ in $\ld(\tilde{f})$ is smaller than the degree of $\tilde{f}'$ in $\ld(\tilde{f}')$.
In the first case the above claim holds.
The second case cannot repeat indefinitely: First of all, if $\ld(\tilde{f}) = \ld(\tilde{f}')$, then
in a later round, either a pseu\-do-re\-duc\-tion of $\tilde{f}'$ modulo $\tilde{f}$ will be performed if
the initial of $\tilde{f}$ does not vanish, or $\init(\tilde{f}) = 0$ has been added as a new equation
(with lower ranked leader).
Since this leads to a sequence in $\Mon(\setauto)$ which strictly decreases, infinite
chains are excluded in this situation.
If case $\ld(\tilde{f}) \neq \ld(\tilde{f}')$ occurs repeatedly, then a sequence $((\theta_i \, \tilde{u}^{(\alpha)})^{e_i})_{i = 1, 2, 3, \ldots}$
of leaders of newly inserted equations arises,
where $\theta_i \in \Mon(\setauto)$, $\alpha \in \{ 1, \ldots, m \}$,
$e_i \in \Z_{\ge 0}$, such that $e_{i+1} < e_i$ holds (and where also $\theta_i \mid \theta_{i+1}$).
Any such sequence is finite. Hence, the first case arises after finitely many steps.
Therefore, termination follows from Dickson's Lemma.

In order to prove correctness, we note that a difference system is only inserted into $T$ if step~12 confirmed passivity. Such a system is quasi-simple as an algebraic system because (up to auto-reduction in step~9 and Janet completion in step~11) it was returned as one system $A_i$ in step~4.
Condition (\ref{de:differentialsimple_c}) in Definition~\ref{de:differencesimple} is ensured by step~14.
Hence, all difference systems in $T$ are quasi-simple.
Splittings of systems only arise in step~4 by adding an equation $\init(\tilde{f}) = 0$ and the corresponding inequation $\init(\tilde{f}) \neq 0$, respectively, to the two new systems replacing the given one.
Since no solutions are lost or gained, this leads to a partition as required by Definition~\ref{de:differencedecomposition}.
\end{proof}

%
%
\section{s-consistency check}\label{sec:Discretization}

Recall that $\tilde{S}^{=}$ (resp.\ $\tilde{S}^{\neq}$) denotes the set
of left hand sides of equations (resp.\ inequations) in
a difference system $\tilde{S}$. We shall use the same notation
for differential systems.

Clearly, if one approximates the partial derivatives occurring in a simple differential system $S$ by appropriate finite differences, then one obtains  a w-consistent approximation $\tilde{S}$ to $S$ (cf. Sect.~\ref{sec:consistency} and \ref{sec:NavierStokes}).

The following algorithm verifies s-consistency of such FDA.

\begin{algorithm}
\DontPrintSemicolon\SetCommentSty{textit}
\KwInput{A simple differential system $S$ over ${\cR}$, a differential ranking $\ranking$ on ${\cR}$, a difference ranking $>$ on $\tilde{\cR}$, a total ordering on $\setauto$ (used by \algref{Decompose}) and a difference system $\tilde{S}$ consisting of equations that are w-consistent with $S$}
\KwOutput{$\tilde{L} = \{ (\tilde{L}_1, b_1), ..., (\tilde{L}_r, b_r) \}$, where $\tilde{L}_i$ is s-consistent (resp.\ w-consistent) with $L_i\xleftarrow[h\rightarrow 0]{}  \tilde{L}_i$ if $b_i = \text{\bf true}$ (resp.\ {\bf false})}
$\tilde{L}=\{ \tilde{L}_1, \ldots, \tilde{L}_k \} \gets \algref{DifferenceDecomposition}(\tilde{S}^{=}, >)$\;
\For{$i = 1$, \ldots, $k$}{
    \If(\tcp*[f]{Def.~\ref{DefImplication}}){$\exists \tilde{f}\in \tilde{L}_i^{\neq}$ s.t. $\tilde{f}\rhd f\in \llbracket S^{=} \rrbracket$}{
      $\tilde{L} \gets \tilde{L}\setminus \{ \tilde{L}_i \}$\;
    }
    \Else{
      $b_i \gets \text{\bf true}$\;
      \For{$\tilde{f}\in \tilde{L}^{=}$}{
         compute $f \in {\cR}$ such that $\tilde{f} \rhd f$\tcp*[r]{Rem.~\ref{rem:continuouslimit}}
         \If(\tcp*[f]{Alg.~\ref{alg:janetreducenonlin}}){$\NF(f,S^{=},\ranking)\neq 0$}{
             $b_i \gets \text{\bf false}$; \, {\bf break}\;
         }
      }
    }
}
\Return $\{ \, (\tilde{L}_i, b_i) \mid \tilde{L}_i \in \tilde{L} \, \}$\;
\caption{\algref{S-ConsistencyCheck}\label{alg:discretization}}
\end{algorithm}

{\it Correctness} of the algorithm follows from Definition~\ref{def-wcons} (extended to inequations) and from  passivity of the output subsystems of Algorithm~\ref{alg:differencedecomp}. Their solution spaces partition the solution space of the input FDA. Thereby,
any subsystem $\tilde{L}_i$ in the output with $b_i=$ {\bf true} is s-consistent with
$L_i$, where 
$\tilde{L}_i\xrightarrow[h\rightarrow 0]{}  {L}_i$
and w-consistent if $b_i =$ {\bf false}.
If $b_i =$ {\bf true} for all $i$, then $\tilde{S}$ is s-consistent with $S$.
{\it Termination} follows from that of the subalgorithms.

%
%
\section{Illustrative Example}\label{sec:examples}

\begin{example}\label{IllustrativeExample}
We consider the system of nonlinear PDEs\\[-0.2cm]
\begin{equation}\label{eq:PDEsystem2}
\left\{ \begin{array}{rcl}
\frac{\partial u}{\partial x} - u^2 & = & 0\\[0.5em]
\frac{\partial u}{\partial y} + u^2 & = & 0\,,
\end{array} \right.
\qquad \qquad u = u(x,y)\,,
\end{equation}
which is a simple differential system,
as it is easily checked that the cross-derivative
$\partial_y (u_{x} - u^2) - \partial_x (u_{y} + u^2)$
reduces to zero modulo (\ref{eq:PDEsystem2}).
We investigate the discretized system which is obtained by replacing
$\partial_{x}$ and $\partial_{y}$ by
the forward differences $D_1^+$, $D^+_2$, respectively:
\begin{equation}\label{eq:FDA2a}
\left\{
\begin{array}{rcl}
D^+_1 \tilde{u} - \tilde{u}^2 & = & 0 \qquad \qquad (A)\\[0.5em]
D^+_2 \tilde{u} + \tilde{u}^2 & = & 0 \qquad \qquad (B)
\end{array}
\right.
\end{equation}
This system of nonlinear difference equations is simple as an algebraic system, but the
passivity check reveals the consequence
\[
\begin{array}{l}
\automorphism_2 A - \automorphism_1 B + (h \, \tilde{u}_{i+1,j} + h^2 \, \tilde{u}_{i,j}^2 + h \, \tilde{u}_{i,j} - 1) \, A \, +\\[0.5em]
\qquad (h \, \tilde{u}_{i,j+1} - h^2 \, \tilde{u}_{i,j}^2 + h \, \tilde{u}_{i,j} + 1) \, B \, \, = \, \, -2 \, h^3 \, \tilde{u}_{i,j}^4\,.
\end{array}
\]
The continuous limit of $\tilde{u}_{i,j}^4$ for $h \to 0$ is the differential polynomial $u^4$, which is not in the radical differential ideal corresponding to (\ref{eq:PDEsystem2}). Hence, FDA (\ref{eq:FDA2a}) is not s-consistent with system (\ref{eq:PDEsystem2}).

\noindent
Now we consider the discretization obtained by replacing
$\partial_{x}$ and $\partial_{y}$ by $D_1^+$ as before and the backward difference $D_2^-$, respectively:
\begin{equation}\label{eq:FDA2b}
\left\{
\begin{array}{rcl}
D^+_1 \tilde{u} - \tilde{u}^2 & = & 0 \qquad \qquad (C)\\[0.5em]
D^-_2 \tilde{u} + \tilde{u}^2 & = & 0 \qquad \qquad (E)
\end{array}
\right.
\end{equation}
In order to avoid negative shifts, we replace equation $(E)$ by $\automorphism_2 (E)$.
Then this system of nonlinear difference equations is simple because it is
algebraically simple and the passivity check yields
\[
\begin{array}{l}
\automorphism_1 E - (h^2 \, \tilde{u}_{i,j+1}^2 + h \, \tilde{u}_{i+1,j+1} + h \, \tilde{u}_{i,j+1} + 1) \, \automorphism_2 C \, +\\[0.5em]
\qquad C - E - h \, (h \, \tilde{u}_{i,j+1}^2 + \tilde{u}_{i,j+1} + \tilde{u}_{i,j}) \, E \, \, = \, \, 0\,.
\end{array}
\]
We conclude that FDA (\ref{eq:FDA2b}) is s-consistent with system (\ref{eq:PDEsystem2}).
\end{example}

%
%
\section{Navier-Stokes Equations}\label{sec:NavierStokes}
\begin{example}\label{Ex:NSE}
The Navier-Stokes equations for a three-di\-men\-sio\-nal incompressible viscous flow in vector notation are
\begin{equation}
 \dfrac{\partial \mathbf{u}}{\partial t}+(\mathbf{u}\cdot \nabla)\, \mathbf{u}
+\nabla  p -\frac{1}{\mathrm{Re}}\Delta\, \mathbf{u}=0\,,\quad \nabla\cdot \mathbf{u}=0\,, \label{NSE}
\end{equation}
where $\mathbf{x}=(x_1,x_2,x_3)$, $\mathbf{u}(\mathbf{x},t)$ is the velocity vector  $\mathbf{u}=(u,v,w)$, $p(\mathbf{x},t)$ is the pressure and   $\mathrm{Re}$ is the Reynolds number. For the ranking $\ranking_{{\rm TOP},{\rm lex}}$ (Example~\ref{ex:lexrankings}) such that
\begin{equation}\label{DifferentialRanking}
\partial_t \succ \partial_1 \succ \partial_2 \succ \partial_3\quad \mathrm{and}\quad p \succ u \succ v \succ w\,,
\end{equation}
the (non-admissible) prolongation $\nabla\cdot \partial_t\mathbf{u}=0$ of the right (continuity) equation in~\eqref{NSE} and its reduction modulo the left (momentum) equation yields the
pressure Poisson equation
\begin{equation}\label{PPE}
    \Delta\, p +\nabla\cdot (\mathbf{u}\cdot \nabla)\,\mathbf{u}=0\,,
\end{equation}
which is the {\em integrability condition}~(cf.~\cite{Seiler'10},\,p.50) to~\eqref{NSE}. Clearly, the differential system~\eqref{NSE} and~\eqref{PPE} satisfies the simplicity conditions (\ref{de:algebraicsimple_1})--(\ref{de:algebraicsimple_4}) in Definition~\ref{de:algebraicsimple}. Now we consider the following class of FDA to~\eqref{NSE} defined on the four-dimensional grid~\eqref{grid}
\begin{equation}\label{DNSE}
  D_t\mathbf{\tilde{u}}+(\mathbf{\tilde{u}}\cdot \mathbf{D})\, \mathbf{\tilde{u}}
+\mathbf{D}\,\tilde{p} -\frac{1}{\mathrm{Re}}\,{\tilde{\Delta}}\, \mathbf{\tilde{u}}=0\,,\quad \mathbf{D}\cdot \mathbf{\tilde{u}}=0\,,
\end{equation}
where $D_t$ approximates $\partial_t$, $\mathbf{D}=(D_1,D_2,D_3)$ approximates $\nabla$ and ${\tilde{\Delta}}$ approximates $\Delta$. It is clear that system~\eqref{DNSE} is w-consistent with
\eqref{NSE}. If one considers the difference analogue of ranking~\eqref{DifferentialRanking} satisfying
\begin{equation}\label{DifferenceRanking}
\sigma_t \succ \sigma_1 \succ \sigma_2 \succ \sigma_3\quad \mathrm{and}\quad \tilde{p} \succ \tilde{u} \succ \tilde{v} \succ \tilde{w}\,,
\end{equation}
then completion of
\eqref{DNSE} to a passive form by Algorithm~\ref{alg:differencedecomp} is equivalent to enlargement of this system with the integrability condition
\begin{equation}\label{DPPE}
(\mathbf{D}\cdot \mathbf{D})\,p+\mathbf{D}\cdot (\mathbf{\tilde{u}}\cdot \mathbf{D})\,\mathbf{\tilde{u}}=0\,.
\end{equation}
Eq.~\eqref{DPPE} approximates Eq.~\eqref{PPE} and can be obtained, in the full analogy with the differential case, by the prolongation $\mathbf{D}\cdot D_t\tilde{u}=0$ of the discrete continuity equation in system~\eqref{DNSE} and its reduction by the discrete momentum equation.

The left-hand sides of Eqs.~\eqref{NSE} and \eqref{PPE} form a difference \Gr basis of the ideal generated by
Eqs.~\eqref{DNSE} in $\Q(\mathrm{Re,h})\{\mathbf{\tilde{u}},\tilde{p}\}$. Hence, by Theorem \ref{th:s-cons}, FDA \eqref{DNSE}--\eqref{DPPE} to Eqs.~\eqref{NSE}, \eqref{PPE} is s-consistent.
\end{example}
\begin{remark}
Formulae~\eqref{DNSE} and~\eqref{DPPE} give s-consistent FDA of the Navier-Stokes and pressure Poisson equations in the two-di\-men\-sio\-nal case as well. Examples of such FDA were studied in~\cite{ABGLS'13}. One more s-consistent two-dimensional FDA was derived in~\cite{ABGLS'17}. In its approximation of Eq.~\eqref{PPE} the redundant to zero term $-\frac{\tilde{\Delta}\,(\nabla\cdot \mathbf{\tilde{u}})}{\mathrm{Re}}$ was included in the left-hand side of~\eqref{DPPE}. This inclusion improves the numerical behavior of FDA (cf.~\cite{Rempfer'06}, Sect.3.2).
\end{remark}

\begin{example}\label{Ex:NotSconsistentNSE}
For the two-dimensional system~\eqref{NSE}, \eqref{PPE} with grid velocities $(u,v)$ and pressure $p$ we consider the
discretization
\begin{equation}\label{FDA_to_NSE}
\left\lbrace
\begin{array}{l}
\tilde{e}^{(1)}:= D_1 \tilde{u} + D_2 \tilde{v}=0\,, \\[0.4em]
\tilde{e}^{(2)}:= D_t \tilde{u} + \tilde{u} D_1 \tilde{u}+\tilde{v}D_2\tilde{u}+D_1\tilde{p}-\frac{1}{\mathrm{Re}}\,{\tilde{\Delta}}\,\tilde{u}=0\,,\\[0.4em]
\tilde{e}^{(3)}:= D_t \tilde{v} + \tilde{u} D_1 \tilde{v}+\tilde{v}D_2\tilde{v}+D_2\tilde{p}-\frac{1}{\mathrm{Re}}\,{\tilde{\Delta}}\,\tilde{v}=0\,,\\[0.4em]
\tilde{e}^{(4)}:= {\tilde{\Delta}}\,\tilde{p} + (D_1\tilde{u})^2+2(D_2\tilde{u})\,(D_1\tilde{v})+(D_2\tilde{v})^2=0\,,
\end{array}
\right.
\end{equation}
where\\[-0.5cm]
\begin{equation*}
D_t=\frac{\sigma_t - 1}{h},\  D_{i}=\frac{\sigma_{i}-\sigma_{i}^{-1}}{2h},\  \tilde{\Delta}=\frac{\sigma_1+\sigma_2-4+\sigma_1^{-1}+\sigma_2^{-1}}{h^2}
\end{equation*}
and $i\in \{1,2\}$. Then FDA~\eqref{FDA_to_NSE} is w-consistent with~\eqref{NSE} and \eqref{PPE}. However, it is s-inconsistent since $\tilde{e}^{(4)}\not \in \llbracket \tilde{\cI}\rrbracket$ where $\llbracket \tilde{\cI}\rrbracket\subset \tilde{\cR}$ is the perfect closure (see Lemma 2.6) of the ideal  generated by $\{\tilde{e}^{(1)},\tilde{e}^{(2)},\tilde{e}^{(3)}\}$.
It follows,
as modulo $\llbracket \tilde{\cI}\rrbracket$ the equality holds
\[
\mathbf{D}\cdot (\mathbf{\tilde{u}}\cdot \mathbf{D})\,\mathbf{\tilde{u}}=(D_1\tilde{u})^2+2(D_2\tilde{u})\,(D_1\tilde{v})+(D_2\tilde{v})^2\,,
\]
whereas the difference operator $\mathbf{D}\cdot \mathbf{D}$
in~\eqref{DPPE} is not equal to $\tilde{\Delta}$:\\[-0.2cm]
\[
\mathbf{D}\cdot \mathbf{D}=\frac{\sigma_1^2+\sigma_2^2-4+\sigma_1^{-2}+\sigma_2^{-2}}{4h^2}\neq \tilde{\Delta}\,.
\]
\end{example}

%
%
\section{Conclusions}\label{sec:conclusions}

In this paper, for the first time, we devised a universal algorithmic approach to check s(trong)-consistency of a system of finite difference equations that approximates a polynomially nonlinear PDE system on a Cartesian solution grid. In our earlier paper~\cite{GR'10} we studied this problem for linear PDE systems and showed how to check their s-con\-sis\-ten\-cy by using differential and difference \Gr bases of ideals generated by the polynomials in PDE and FDA.
As this takes place, all related computations can be done, for example, with the Maple packages {\sc LDA}~\cite{GR'12} and {\sc Janet}~\cite{Maple-Janet'03}.

Extension of the \Gr basis method to the nonlinear case is not algorithmic due to the non-Noetherity of differential and difference polynomial rings. On the other hand, the differential Thomas decomposition (Def.~\ref{DifferentialThomasDecomposition}) and its difference analogue (Def.~\ref{de:differencedecomposition}) are fully algorithmic (cf.~\cite{GLHR'18,BGLHR'12,Robertz6} and Alg.~\ref{alg:differencedecomp}). These decompositions are essentials of the s-consistency check (Alg.~\ref{alg:discretization}). The differential Thomas decomposition is built into Maple 2018 and its implementation for previous versions of Maple is freely available on the web.  Algorithm~\ref{alg:differencedecomp} has not been implemented yet.

If we are looking for s-consistent FDA to a simple PDE system and for a (w-consistent) FDA Algorithm~\ref{alg:discretization} returns {\bf false}, as it takes place in Example~\ref{Ex:NotSconsistentNSE},  then we have to try another FDA and check the s-consistency again. In doing so, if we know a minimal generating set for the radical differential ideal generated by the input simple differential system, then its FDA should be tried as an input for Algorithm~\ref{alg:differencedecomp}. Such is indeed the case with the Navier-Stokes equations   (Ex.~\ref{Ex:NSE}), for which Algorithm~\ref{alg:differencedecomp} returns s-consistent discretization~\eqref{DNSE},\,\eqref{DPPE} if it is applied to Eqs.~\eqref{NSE} and ranking~\eqref{DifferenceRanking}.

However, the choice of FDA to the minimal generating set for the simple differential system as an input for Algorithm~\ref{alg:differencedecomp} not always yields s-consistent FDA,
as demonstrated by Example~\ref{IllustrativeExample}. In addition, designing an algorithm for construction of a minimal generating set for an ideal is an open problem for commutative polynomial rings and is probably unsolvable in the differential case.

In
applications of finite difference methods to
PDE systems which have integrability conditions, it is important not only to preserve these conditions at the discrete level, but to ensure also that
FDA is s-consistent with the PDE system.
FDA~\eqref{DNSE}, \eqref{DPPE} to the Navier-Stokes equations~\eqref{NSE} satisfies this requirement and for this reason it is  appropriate for numerical solution of initial or/and boundary-value problems for~\eqref{NSE} in the  velocity-pressure formulation.

\section{Acknowledgments}
The authors are grateful to the referees for their valuable remarks.
The contribution of the first author (V.P.G.) was partially supported by the Russian Foundation for Basic Research (grant No. 18-51-18005) and by the RUDN University Program (5-100).


\end{document}